\newcommand\datver[1]{\def\datverp%
   {\par\boxed{\boxed{\text{Run: \today}}}}}
\newcommand\boxb[1]{\square_b}
\newcommand\paperbody%
\newtheorem{theorem}{Theorem}[section]
\newtheorem{conjecture}[theorem]{Conjecture}
\newtheorem{lemma}[theorem]{Lemma}
\newtheorem{proposition}[theorem]{Proposition}
\newtheorem{corollary}[theorem]{Corollary}
\theoremstyle{remark}
\newtheorem{definition}[theorem]{Definition}
\newtheorem{remark}[theorem]{Remark}
\newtheorem{examples}[theorem]{Examples}
\newtheorem{example}[theorem]{Example}
\newcommand\co{\colon\,}
\newcommand\Tr{\operatorname{Tr}}
\newcommand\lp{\textup{(}}
\newcommand\rp{\textup{)}}
\newcommand\cA{\mathcal{A}}
\newcommand\cB{\mathcal{B}}
\newcommand\cE{\mathcal{E}}
\newcommand\cH{\mathcal{H}}
\newcommand\cK{\mathcal{K}}
\newcommand\cL{\mathcal{L}}
\newcommand\ep{e^\perp}
\newcommand\ZZ{\mathbb Z}
\newcommand\bbC{\mathbb C}
\newcommand\bbN{\mathbb N}
\newcommand\bbP{\mathbb P}
\newcommand\bbQ{\mathbb Q}
\newcommand\bbR{\mathbb R}
\newcommand\bbT{\mathbb T}
\newcommand\bbZ{\mathbb Z}
\newcommand\cD{\mathcal D}
\newcommand\bZ{\bbZ}
\newcommand\bR{\bbR}
\newcommand\bT{\bbT}
\newcommand\bC{\bbC}
\newcommand\bP{\bbP}
\newcommand\cC{\mathcal C}
\newcommand\rank{\operatorname{rank}}
\newcommand\Aut{\operatorname{Aut}}
\newcommand\End{\operatorname{End}}
\newcommand\Innbar{\overline{\operatorname{Inn}}}
\newcommand\Id{\operatorname{Id}}
\newcommand{\im}{\operatorname{im}}
\newcommand\Ca{$C^*$-algebra}
\begin{document}
\title{A noncommutative sigma-model}

\author{Varghese Mathai}
\address{Department of Pure Mathematics, University of Adelaide,
Adelaide, SA 5005, Australia}
\email{mathai.varghese@adelaide.edu.au}
\urladdr{http://www.maths.adelaide.edu.au/mathai.varghese/}
\author{Jonathan Rosenberg}
\thanks{Both authors thank the Erwin Schr\"odinger International Institute 
for Mathematical Physics for its hospitality and support under the
program in Gerbes, Groupoids, and Quantum Field Theory in 
Spring 2006, which made the beginning of this work possible. 
VM was partially supported by the Australian Research
Council. JR was partially supported by NSF Grants DMS-0504212
and DMS-0805003, and also
thanks the Department of  Pure Mathematics at The University of Adelaide
for its hospitality during visits in August 2007 and March 2009.}
\address{Department of Mathematics,
University of Maryland,
College Park, MD 20742, USA}
\email{jmr@math.umd.edu}
\urladdr{http://www.math.umd.edu/\raisebox{-.6ex}{\symbol{"7E}}jmr}
\keywords{noncommutative sigma-model, Euler-Lagrange equation,
noncommutative torus, $*$-endomorphism, harmonic map, partition function}
\subjclass[2000]{58B34 (primary), 46L60, 46L87, 58E20, 81T30, 81T75, 83E30  
(secondary)}
\begin{abstract}
We begin to study a sigma-model in which
both the spacetime manifold and the two-dimensional string world-sheet
are made noncommutative. We focus on the
case where both the spacetime manifold and the two-dimensional
string world-sheet are replaced by noncommutative $2$-tori.
In this situation, we are able to determine when maps between such
noncommutative tori exist, to derive the Euler-Lagrange equations,
to classify many of the critical
points of the Lagrangian, and to study the associated partition function.
\end{abstract}
\maketitle

\section{Introduction}
\label{sec:intro}
Noncommutative geometry is playing an increasingly important role in
physical field theories, especially quantum field theory and string
theory. Connes \cite{Co94} proposed a general formulation of action
functionals in noncommutative spacetime, and there is now a large
literature on noncommutative field theories (surveyed in part in
\cite{DN} and \cite{Sz03}). Thus it seems appropriate now to study fully
noncommutative sigma-models.

In our previous work \cite{MR,MR1,MR2}, we argued that
a consistent approach to T-duality for spacetimes $X$ which are principal
torus bundles over another space $Z$, with $X$
possibly equipped with a non-trivial H-flux, forces the
consideration of ``noncommutative'' T-duals in some situations.  A special
case of this phenomenon was also previously noted by Lowe, Nastase and 
Ramgoolam \cite{LNR}.

However, this work left open the question of what sort of sigma-model
should apply in the situation where the ``target space'' is no longer a space
at all but a noncommutative \Ca, and in particular (as this is the 
simplest interesting case), a noncommutative torus.

In classical sigma-models in string theory, the fields are maps
$g\co \Sigma \to X$, where $\Sigma$ is closed and $2$-dimensional, and
the target space $X$ is $10$-dimensional spacetime. The leading term
in the action is
\begin{equation}
\label{eq:classicalaction}
S(g)= \int_\Sigma \Vert \nabla g(x)\Vert^2 d\sigma(x),
\end{equation}
where the gradient and norm are computed with respect to suitable
Riemannian (or pseudo-Riemannian)
metrics on $\Sigma $ and $X$, $\sigma$ is volume measure
on $\Sigma$, and critical points of the action are just harmonic maps
$\Sigma\to X$. Usually one adds to \eqref{eq:classicalaction} a
Wess-Zumino term, related to the H-flux, an Einstein term, corresponding
to general relativity on $X$, and various other terms, but here we will
focus on \eqref{eq:classicalaction} (except in Section \ref{sec:WZ},
where the Wess-Zumino term will also come up).

The question we want to treat here is what should replace maps
$g\co \Sigma \to X$ and the action \eqref{eq:classicalaction}
when $X$ becomes noncommutative.  More precisely, we will
be interested in the case where we replace $C_0(X)$, the algebra
of continuous functions on $X$ vanishing at infinity, by a
noncommutative torus.  At the end of the paper, we will also comment
on what happens in the more complicated case, 
considered in \cite{MR}, \cite{MR1}, and \cite{MR2}, where
$A=\Gamma_0(Z,\cE)$ is the algebra of
sections vanishing at infinity of a continuous field $\cE$ of
noncommutative $2$-tori over a space $Z$, which plays the role of
reduced or ``physically observable'' spacetime.  (In other words, we
think of $X$ as a bundle over $Z$ with \emph{noncommutative} $2$-torus
fibers.)

Naively, since a map $g\co\Sigma \to X$ is equivalent to
a {\Ca} morphism $C_0(X)\to C(\Sigma)$, one's first guess would be to
consider $*$-homomorphisms $A\to C(\Sigma)$, where $\Sigma$ is still
an ordinary $2$-manifold. The problem with this approach when $A$ is 
complicated is that often there are
no such maps. For example, if $A = C_0(Z)\otimes A_\theta$ with
$\theta$ irrational (this is $\Gamma_0(Z,\cE)$ for a trivial field $\cE$ of
noncommutative tori over $Z$),
then simplicity of $A_\theta$ implies there are \emph{no} non-zero
$*$-homomorphisms $A\to C(\Sigma)$. Thus the first thing we see is that
once spacetime becomes noncommutative, it is necessary to allow 
the world-sheet $\Sigma$ to become noncommutative as well.

In most of this paper, we consider a sigma-model based on
$*$-homomorphisms between noncommutative $2$-tori. The first problem is to
determine when such maps exist, and this is studied in Section
\ref{sec:morphisms}. The main result here is Theorem
\ref{thm:homostoirrat}, which 
determines necessary and sufficient conditions for existence of a
non-zero $*$-homomorphism from $A_\Theta$ to $M_n(A_\theta)$, when
$\Theta$ and $\theta$ are irrational and $n\ge 1$. The main section of
the paper is Section \ref{sec:harmonic}, which studies an energy
functional on such $*$-homomorphisms. The critical points of the
energy are called harmonic maps, and we classify many of them when
$\Theta=\theta$. We also determine the Euler-Lagrange equations for
harmonic maps (Proposition \ref{prop:ELeq}), which are considerably
more complicated than in the commutative case. Subsection
\ref{sec:rational} deals in more detail with the special case of maps
from $C(\bT^2)$ to a rational noncommutative torus. Even this case is
remarkably complicated, and we discover interesting connections with
the field equations studied in \cite{DKL}. Section
\ref{sec:refinements} deals with various variations on the theory,
such as how to incorporate general metrics and the Wess-Zumino
term, and what happens when spacetime is a ``bundle'' of
noncommutative tori and not just a single noncommutative
torus. Finally, Section \ref{sec:physics} discusses what the partition 
function for our sigma-model may look like.

The authors are very grateful to Joachim Cuntz, Hanfeng Li, and the
referee of this paper for several helpful comments. They are
especially grateful to Hanfeng Li for writing the appendix \cite{HLi},
which resolves two problems which were unsolved when the first draft
of this paper was written.

\section{Classification of morphisms between irrational rotation algebras}  
\label{sec:morphisms}

In principle one should allow replacement of $\Sigma$ by general
noncommutative Riemann surfaces, as defined for example in \cite{Nat}
(in the case of genus $0$) and \cite{MP} (in the case of genus $>1$),
but since here we take our spacetimes to be noncommutative 
tori, it is natural to consider the ``genus one'' case and to 
replace $C(\Sigma)$ by $A_\theta$ for some $\theta$.
This case was already discussed and studied in
\cite{DKL}, but only in the case of exceptionally simple target spaces
$X$. In fact, in \cite{DKL0} and \cite{DKL}, $X$ was taken to be $S^0$, i.e.,
the algebra $A$ was taken to be $\bbC\oplus\bbC$. (Or alternatively,
one could say that they took $A=\bbC$, but allowed non-unital maps.)

We begin by classifying $*$-homomorphisms. We begin with the
(easy) case of unital maps.
\begin{theorem}
\label{thm:unitalhomostoirrat}
Fix $\Theta$ and $\theta$ in $(0,\,1)$, both irrational.
There is a unital $*$-homo\-mor\-phism $\varphi\co A_\Theta\to A_\theta$
if and only if $\Theta = c\theta+d$ for some $c,\,d\in\bbZ$, $c\ne
0$. Such a $*$-homomorphism $\varphi$ can be chosen to be
an isomorphism onto its image if and only if $c=\pm 1$.
\end{theorem}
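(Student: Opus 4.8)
The statement has two directions, plus the refinement about isomorphisms onto the image. For existence of a unital $*$-homomorphism $A_\Theta \to A_\theta$, the natural strategy is to exploit the fact that $A_\Theta$ is generated by two unitaries $U,V$ with $UV = e^{2\pi i\Theta} VU$, so a unital $*$-homomorphism is precisely a choice of two unitaries $u,v \in A_\theta$ satisfying $uv = e^{2\pi i\Theta} vu$. If $\Theta = c\theta + d$ with $c,d \in \ZZ$ and $c \ne 0$, I would write down such unitaries explicitly: taking the standard generators $W_1, W_2$ of $A_\theta$ with $W_1 W_2 = e^{2\pi i\theta} W_2 W_1$, the pair $u = W_1^{a} W_2^{b}$, $v = W_1^{a'} W_2^{b'}$ for an appropriate integer matrix $\begin{pmatrix} a & b \\ a' & b'\end{pmatrix}$ satisfies a commutation relation with phase $e^{2\pi i (ab' - a'b)\theta}$ (up to a scalar coming from reordering), so choosing the determinant to be $c$ gives phase $c\theta$, and the extra additive integer $d$ is absorbed because $e^{2\pi i d}=1$. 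This handles sufficiency.

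**Necessity.** For the converse, I would use a $K$-theory and trace argument. The canonical trace $\tau$ on $A_\theta$ induces a map on $K_0$, and Pimsner--Voiculescu tells us $K_0(A_\theta) \cong \ZZ^2$ with $\tau_*$ having image $\ZZ + \theta\ZZ \subset \RR$; similarly for $A_\Theta$ with image $\ZZ + \Theta\ZZ$. A unital $*$-homomorphism $\varphi$ induces $\varphi_* : K_0(A_\Theta) \to K_0(A_\theta)$ sending $[1] \mapsto [1]$, and functoriality of the trace (since $\tau \circ \varphi$ is a trace on $A_\Theta$, hence equals $\tau$ by uniqueness of the trace on an irrational rotation algebra) forces $\tau_*(\varphi_*(K_0(A_\Theta))) \subseteq \tau_*(K_0(A_\theta))$, i.e. $\ZZ + \Theta\ZZ \subseteq \ZZ + \theta\ZZ$. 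This inclusion of subgroups of $\RR$ is exactly the condition $\Theta = c\theta + d$ for some integers $c,d$; and $c \ne 0$ because $\Theta$ is irrational, so $\Theta \notin \ZZ$. I expect this trace/$K$-theory step to be the cleanest part, as it is essentially standard Rieffel--Pimsner--Voiculescu technology.

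**The isomorphism refinement.** This is where I expect the main obstacle. If $c = \pm 1$, then $\Theta = \pm\theta + d$, so $A_\Theta \cong A_{\pm\theta} \cong A_\theta$ (using $A_{-\theta} \cong A_\theta$ via $W_1 \mapsto W_1^*$ or $W_1 \leftrightarrow W_2$, and $A_{\theta + d} = A_\theta$), and one checks the explicit map above is in fact surjective onto $A_\theta$ — indeed when $\det\begin{pmatrix} a & b \\ a' & b'\end{pmatrix} = \pm 1$ the matrix lies in $\GL_2(\ZZ)$, so $u,v$ generate the whole algebra. For the converse — that $|c| \ge 2$ precludes $\varphi$ being an isomorphism onto its image — the point is that a subalgebra of $A_\theta$ isomorphic to $A_\Theta$ would have to be a proper subalgebra when $\Theta \ne \pm\theta + d$, yet an isomorphism $\varphi : A_\Theta \xrightarrow{\sim} \varphi(A_\Theta)$ with $\varphi(1) = 1$ would make $\varphi(A_\Theta)$ a unital \Ca. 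The obstruction is again traceable: $\varphi(A_\Theta) \subseteq A_\theta$ with the restricted trace being the unique trace, so $\tau_*(K_0(\varphi(A_\Theta)))$ must be $\ZZ + \Theta\ZZ$; but I would need to argue this range equals $\ZZ + \theta\ZZ$ whenever a subalgebra isomorphic to a full corner is present. More carefully, the clean route is: if $\varphi$ is injective with image $B$, consider whether $B = A_\theta$; if $B \subsetneq A_\theta$ is proper, one shows the inclusion $B \hookrightarrow A_\theta$ cannot exist as an isomorphism onto a proper subalgebra unless $|c|=1$, essentially because $A_\theta$ being simple with a unique trace is ``rigid'' — its only unital \Ca\ endomorphisms with the same $K$-theory data are automorphisms. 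The delicate point I would need to nail down is ruling out an exotic proper unital subalgebra of $A_\theta$ abstractly isomorphic to $A_\Theta$ when $|c| \ge 2$; I suspect this follows from comparing the trace on $K_0$: an isomorphism forces $\ZZ + \Theta\ZZ = \tau_*(K_0(A_\Theta)) = \tau_*(K_0(B))$, and if $B \ne A_\theta$ then (by a cancellation/comparison argument in $A_\theta$, which has stable rank one and a unique trace) $\tau_*(K_0(B))$ is a proper subgroup of $\ZZ + \theta\ZZ$ of index $|c|$, so $|c| \ge 2$; conversely $|c| = 1$ forces $B = A_\theta$. Assembling these gives the stated equivalence.
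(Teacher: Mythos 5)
Your proposal is correct and follows essentially the same route as the paper: the trace/$K_0$ argument for necessity, explicit monomial unitaries in the generators for sufficiency, and the failure of $\varphi_*$ to be surjective on $K_0$ when $|c|\ge 2$ for the isomorphism refinement. The hedging in your last paragraph is unnecessary: since $A_\Theta$ is simple, every nonzero $\varphi$ is automatically an isomorphism onto its image, so the statement must be read as asking when $\varphi$ can be chosen surjective, and then your own observation that $\varphi_*$ has image $\bbZ+c\theta\bbZ$, of index $|c|$ in $\bbZ+\theta\bbZ$, settles it with no appeal to cancellation or ``exotic subalgebra'' considerations.
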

\begin{proof}
Remember from \cite{Rieffel1,Rieffel2} that
projections in irrational rotation algebras are determined up to unitary
equivalence by their traces, that $K_0(A_\theta)$ is mapped
isomorphically to the ordered group $\bbZ + \theta\bbZ
\subset \bbR$ by the unique normalized trace 
$\Tr$ on $A_\theta$, and that the range of the trace $\Tr$ on
projections from $A_\theta$ itself is precisely $(\bbZ +
\theta\bbZ)\cap [0,\,1]$. 

Now a unital $*$-homomorphism $\varphi\co A_\Theta\to A_\theta$ must 
induce an order-preserving map $\varphi_*$ of $K_0$ groups sending the
class of the identity to the class of the identity. Since both $K_0$
groups are identified with dense subgroups of $\bbR$, with the
induced order and with the
class of the identity represented by the number $1$,
this map can be identified with the inclusion of a 
subgroup, with $1$ going to $1$. So $\Theta$, identified with a
generator of $K_0(A_\Theta)$, must lie in $\bbZ + \theta\bbZ$, say,
$\Theta = c\,\theta+d$ for some $c,\,d\in\bbZ$. That proves necessity of
the condition, but sufficiency is easy, since $A_{c\theta +d} \cong
A_{c\theta}$ is the universal {\Ca} on two unitaries $U$ and $V$
satisfying $UV = e^{2\pi i c \theta}VU$, while $A_\theta$ is the
universal {\Ca} on two unitaries $u$ and $v$ satisfying $uv=e^{2\pi i
  \theta}v u$. So define $\varphi$ by $\varphi(U) = u^c$, $\varphi(V)
= v$, and the required condition is satisfied. Note of course that if
$c=\pm 1$, then the images of $U$ and $V$ generate $A_\theta$ and
$\varphi$ is surjective, whereas if $|c|\ne 1$, then $\varphi_*$ is
not surjective (and so $\varphi$ can't be, either).
\end{proof}
\begin{remark}
With notation as in Theorem \ref{thm:unitalhomostoirrat}, if $c=\pm
1$, it is natural to ask if
it follows that \emph{any} $\varphi$ inducing the
isomorphism on $K_0$ is a $*$-isomorphism. The answer is definitely
``no.'' In fact, by \cite[Theorem 7.3]{EllAT}, which applies
because of \cite{ElEv}, for any given possible map $K_0(A_\Theta)\to
K_0(A_\theta)$, there is a $*$-homomorphism $A_\Theta\to A_\theta$
inducing any desired group homomorphism
$\bZ^2 \cong K_1(A_\Theta) \to K_1(A_\theta)\cong \bZ^2$, including
the $0$-map. In particular, $A_\theta$ always has proper (i.e.,
non-invertible) unital $*$-endomorphisms. (To prove this, take $\Theta
= \theta$, and observe that if the induced map on $K_1$ is not
invertible, then the endomorphism of $A_\theta$ cannot be invertible.)
It is not clear, however, whether or
not such endomorphisms constructed using the inductive limit structure
of \cite{ElEv} can be chosen to be smooth.

But Kodaka \cite{Kodaka,Kodaka1} has constructed smooth unital 
$*$-endomorphisms $\Phi$ of $A_\theta$, whose image has nontrivial
relative commutant, but only when $\theta$ is a quadratic
irrational of a certain type.  For a slight improvement on his result,
see Theorem \ref{thm:endos} below.

Note that the de la Harpe-Skandalis determinant $\Delta$ \cite{dlHS},
with the defining property
\[
\Delta(e^y) = \frac{\Tr(y)}{2\pi i}\mod \bbZ+\theta\bbZ,
\]
maps the abelianization of the connected component of the identity
in the unitary group of
$A_\theta$ to $\bbC^\times/(\bbZ+\theta\bbZ)$. Thomsen
\cite{Th} has proved that everything in the kernel of $\Delta$
is a \emph{finite product} of commutators. But for the element
$e^{2\pi i\theta}\in \ker \Delta$, we get a stronger result.
Since (by \cite{ElEv,EllAT}) $A_\theta$ has a proper $*$-endomorphism
$\varphi$ inducing the $0$-map on $K_1$, that means there are
two unitaries in $A_\theta$ (namely, $\varphi(U)$ and $\varphi(V)$)
in the connected component of the identity
in the unitary group with commutator $e^{2\pi i \theta}$.

As far as $*$-\emph{automorphisms} of $A_\theta$ are concerned, some
structural facts have been obtained by Elliott, Kodaka, and
Elliott-R\o rdam \cite{Elliott,Kodaka2,ER}. Elliott and R\o rdam \cite{ER}
showed that $\Innbar(A_\theta)$, the closure of the inner automorphisms,
is topologically simple, and that $\Aut(A_\theta)/\Innbar(A_\theta)
\cong GL(2,\bbZ)$. However, if one looks instead at \emph{smooth}
automorphisms, what one can call \emph{diffeomorphisms}, one sees
a different picture.  For $\theta$ satisfying a certain 
Diophantine condition \cite{Elliott}, $\Aut(A_\theta^\infty)$ is 
an iterated semidirect product, $(U(A_\theta^\infty)_0/\bbT) \rtimes
(\bbT^2\rtimes SL(2,\bbZ))$. This is not true without the Diophantine
condition \cite{Kodaka2}, but it may still be that
$\Aut(A_\theta^\infty) =\Innbar(A_\theta^\infty)\rtimes SL(2,\bbZ)$ 
for all $\theta$.  (See Elliott's review of
\cite{Kodaka2} in \emph{MathSciNet}.)
\end{remark}

Next we consider $*$-homomorphisms that are not necessarily unital. We
can attack the problem in two steps. If 
there is a non-zero $*$-homomorphism $\varphi\co A_\Theta\to 
M_\ell(A_\theta)$, not necessarily unital, then $\varphi(1_{A_\Theta})=p$ is a
self-adjoint projection, and $\im \varphi\subseteq p M_\ell(A_\theta)
p$, which is an algebra strongly Morita-equivalent to $A_\theta$. By
\cite[Corollary 2.6]{Rieffel2}, $p M_\ell(A_\theta) p$ must be
isomorphic to $M_n(A_\beta)$ for some $\beta$ in the orbit of $\theta$
under the action of $GL(2,\bbZ)$ on $\bbR$ by linear fractional
transformations. So we are essentially reduced to the unital case
covered in Theorem \ref{thm:unitalhomostoirrat}, except that we have
to allow for the possibility of passage to matrix algebras.  (This
would be the case even if $\ell=1$, since there is not necessarily any
relationship between $n$ and $\ell$.) This modification is covered in
the following:
\begin{theorem}
\label{thm:unitalhomostomatirrat}
Fix $\Theta$ and $\theta$ in $(0,\,1)$, both irrational, and $n\in
\bbN$, $n\ge 1$. There is a unital $*$-homo\-mor\-phism $\varphi\co
A_\Theta\to M_n(A_\theta)$ 
if and only if $n\Theta = c\theta+d$ for some $c,\,d\in\bbZ$, $c\ne
0$. Such a $*$-homomorphism $\varphi$ can be chosen to be
an isomorphism onto its image if and only if $n=1$ and $c=\pm 1$.
\end{theorem}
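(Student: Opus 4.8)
The plan is to imitate the proof of Theorem~\ref{thm:unitalhomostoirrat}: a $K$-theoretic argument for the necessity of the numerical condition, an explicit construction for its sufficiency, and a short simplicity argument for the last clause.

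For necessity I would start from the induced order-preserving map $\varphi_*\co K_0(A_\Theta)\to K_0(M_n(A_\theta))$, which carries the class of the identity to the class of the identity. Via the corner inclusion $A_\theta\hookrightarrow M_n(A_\theta)$ together with the trace, $K_0(M_n(A_\theta))$ is identified, as an ordered group, with $\bbZ+\theta\bbZ\subseteq\bbR$; the only difference from the $n=1$ case is that now the class of $1_{M_n(A_\theta)}$ corresponds to the number $n$ rather than to $1$. The argument of Theorem~\ref{thm:unitalhomostoirrat} then goes through with this single change (after rescaling by $1/n$, $\varphi_*$ becomes the inclusion of a subgroup), giving $\Theta\in\tfrac1n(\bbZ+\theta\bbZ)$, i.e.\ $n\Theta=c\theta+d$ with $c,d\in\bbZ$, and $c\ne0$ because $\Theta$ is irrational. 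Equivalently and more concretely: $\tau_n\circ\varphi$ is a normalized tracial state on $A_\Theta$, hence the unique trace $\Tr_\Theta$, and applying it to a projection $e\in A_\Theta$ with $\Tr_\Theta(e)=\Theta$ gives a projection $\varphi(e)\in M_n(A_\theta)$ whose normalized trace $\Theta$ must lie in $\tfrac1n(\bbZ+\theta\bbZ)$.

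For sufficiency I would give an explicit map. Recall $A_\theta$ is universal on unitaries $u,v$ with $uv=e^{2\pi i\theta}vu$, so $u^cv=e^{2\pi i c\theta}vu^c$. In $M_n(A_\theta)$, with matrix units $e_{ij}$, set
\[
w=\sum_{i=1}^{n-1}e_{i,i+1}+u^{c}\,e_{n,1},\qquad
\tilde v=\sum_{k=1}^{n}e^{2\pi i(k-1)\Theta}\,v\,e_{kk}\,.
\]
Here $w$ is a cyclic shift carrying the single unitary weight $u^c$ in its corner, so $w$ is unitary and $w^n=u^c\otimes 1_n$, while $\tilde v$ is diagonal with unitary entries. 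Multiplying out $w\tilde v$ and $\tilde v w$ entry by entry, the super-diagonal entries agree up to the factor $e^{2\pi i\Theta}$, and the $(n,1)$-entry works out because $u^cv=e^{2\pi i c\theta}vu^c$ and $e^{2\pi i n\Theta}=e^{2\pi i(c\theta+d)}=e^{2\pi i c\theta}$ (as $d\in\bbZ$); the upshot is $w\tilde v=e^{2\pi i\Theta}\,\tilde v\,w$. Since $w$ and $\tilde v$ are unitaries in the unital algebra $M_n(A_\theta)$, the universal property of $A_\Theta$ then yields a unital $*$-homomorphism $\varphi\co A_\Theta\to M_n(A_\theta)$ with $\varphi(U)=w$, $\varphi(V)=\tilde v$; for $n=1$ this is exactly the map built in Theorem~\ref{thm:unitalhomostoirrat}.

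For the final sentence: if $n=1$ and $c=\pm1$, Theorem~\ref{thm:unitalhomostoirrat} already provides a surjective such $\varphi$. Conversely, if some $\varphi$ is surjective then, $A_\Theta$ being simple, $\varphi$ is a $*$-isomorphism $A_\Theta\xrightarrow{\ \sim\ }M_n(A_\theta)$. For $n\ge2$ this cannot happen: the matrix-unit subalgebra $M_n(\bbC)\subseteq M_n(A_\theta)$ displays $1$ as a sum of $n$ mutually orthogonal, mutually equivalent projections of normalized trace $1/n$, whereas every projection of $A_\Theta$ has trace in $\bbZ+\Theta\bbZ$, and $1/n\notin\bbZ+\Theta\bbZ$ because $\Theta$ is irrational and $n\ge2$. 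Hence $n=1$, and then $A_\Theta\cong A_\theta$ forces $\Theta\equiv\pm\theta\pmod{\bbZ}$, i.e.\ $c=\pm1$. The main obstacle in all of this is the sufficiency construction — one has to calibrate the scalar twist in $\tilde v$ against the corner weight $u^c$ in $w$ so that the commutation constant comes out to be precisely $e^{2\pi i\Theta}$, with the integer $d$ absorbed via $e^{2\pi i d}=1$ — while everything else is a direct adaptation of Theorem~\ref{thm:unitalhomostoirrat}.
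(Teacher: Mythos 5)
Your proof is correct, and while the necessity half and the reading of ``isomorphism onto its image'' as surjectivity match the paper exactly (as they must: simplicity of $A_\Theta$ makes every unital $\varphi$ injective, so the clause can only be about surjectivity), the sufficiency half takes a genuinely different and more explicit route. The paper builds the embedding by stringing together Morita equivalences, $A_{(c\theta+d)/n}\sim A_{n/(c\theta+d)}\hookrightarrow A_{1/(c\theta+d)}\sim A_{c\theta+d}\hookrightarrow A_\theta$, which only lands the image in some $M_\ell(A_\theta)$ with $\ell$ a priori unrelated to $n$; it must then track the trace multipliers of the Morita equivalences to see that $\varphi_*$ is multiplication by $n$ on $K_0$, and invoke cancellation (projections determined up to unitary equivalence by their $K_0$ class) to conjugate $\varphi(1)$ onto $1_n$ and compress into $M_n(A_\theta)$. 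Your twisted-shift pair $w,\tilde v$ bypasses all of that: the only computation is the relation $w\tilde v=e^{2\pi i\Theta}\tilde v w$, whose corner entry closes up precisely because $e^{2\pi in\Theta}=e^{2\pi ic\theta}$, and universality of $A_\Theta$ then hands you a unital map directly into $M_n(A_\theta)$. What the paper's approach buys is the structural picture it reuses in Lemma~\ref{lem:GL2Q} and Theorem~\ref{thm:homostoirrat} (Morita equivalences as $GL(2,\bbZ)$, inclusions as $\mathrm{diag}(r,1)$); what yours buys is a concrete formula for $\varphi$, which would be genuinely useful later when evaluating the energy functional on explicit maps. For the final clause the paper cites Rieffel's theorem that $A_\Theta\cong M_n(A_\theta)$ forces $n=1$, whereas your observation that $1/n\notin\bbZ+\Theta\bbZ$ for $n\ge 2$ reproves that fact from the trace-range results already quoted in the proof of Theorem~\ref{thm:unitalhomostoirrat}; both are sound.
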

\begin{proof}
The argument is similar to that for Theorem
\ref{thm:unitalhomostoirrat}, since $K_0(M_n(A_\theta))$ is again
isomorphic (as an ordered group) to $\bbZ+\theta\bbZ$, but this time
the class of the identity is represented by $n$, so that if both $K_0$
groups are identified with subgroups of $\bbR$ in the usual way,
$\varphi_*$ must be multiplication by $n$. Hence if $\varphi$ exists,
$n\Theta \in \bbZ+\theta\bbZ$. 

For the other direction, suppose we
know that $n\Theta =c\theta+d$. We need to construct an embedding of
$A_\Theta$ into a matrix algebra over $A_\theta$. By \cite[Theorem
  4]{Rieffel1}, $A_\Theta= A_{(c\theta+d)/n}$ is strongly Morita equivalent to
$A_{n/(c\theta+d)}$, which embeds unitally into $A_{1/(c\theta+d)}$ as in
the proof of 
Theorem \ref{thm:unitalhomostoirrat}, and $A_{1/(c\theta+d)}$ is Morita
equivalent to $A_{c\theta+d}\cong A_{c\theta}$, which embeds
unitally into 
$A_\theta$. Stringing things together, we get an 
embedding of $A_\Theta$ into a matrix algebra over $A_{\theta}$. (By 
\cite[Proposition 2.1]{Rieffel1}, when two unital {\Ca}s are Morita
equivalent, each one embeds as a corner into a matrix algebra over the
other.) So we get a non-zero $*$-homomorphism $A_\Theta\to
M_\ell(A_\theta)$ (not necessarily unital), possibly with $\ell\ne
n$. The induced map 
$\varphi_*$ on $K_0$ can be identified with an order-preserving
homomorphism from $\bbZ + \left(\frac{c\theta+d}{n}\right)\bbZ$ to $\bbZ +
{\theta}\bbZ$. But in fact we can determine this map precisely, using
the fact \cite[page 425]{Rieffel1} that the Morita equivalence
from $A_{(c\theta+d)/n}$ to $A_{n/(c\theta+d)}$ is associated to multiplication
by $n/(c\theta+d)$, and the Morita equivalence from $A_{1/(c\theta+d)}$ to
$A_{c\theta+d}$ is associated to multiplication by $c\theta+d$. Thus the
composite map
$\varphi_*$ is multiplication by $n$, and sends the class of
$1_{A_\Theta}$ to $n$, which is the class of $1_n$ in
$K_0(M_\ell(A_\theta))$, where necessarily $\ell\ge n$. 
Since (by \cite{Rieffel2}) projections are
determined up to unitary equivalence by their classes in $K_0$, we can
conjugate by a unitary and arrange for $\varphi$ to map $A_\Theta$
unitally to $M_n(A_\theta)$.

For the last statement we use
\cite[Theorem 3]{Rieffel1}, which says that $A_\Theta$ can be isomorphic to
$M_n(A_\theta)$ only if $n=1$.
\end{proof}

We can now reorganize our conclusions in a way that is algebraically
more appealing. First, it's helpful in terms of motivation to point
out the following purely algebraic lemma, which we suspect is known,
though we don't know where to look it up.
\begin{lemma}
\label{lem:GL2Q}
Let $M$ be the submonoid {\lp}{\bfseries not} a subgroup{\rp} of $GL(2,\bbQ)$
consisting of matrices in $M_2(\bbZ)$ with non-zero determinant, i.e.,
of integral matrices having inverses that are not necessarily integral. Then
$M$ is generated by $GL(2,\bbZ)$ and by the matrices of the form
$\begin{pmatrix} r & 0 \\ 0 & 1\end{pmatrix}$, $r\in\bbZ\smallsetminus
  \{0\}$.
\end{lemma}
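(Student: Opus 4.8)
The plan is to deduce this from the Smith normal form over $\bbZ$. First I would dispose of the easy inclusion: every listed generator belongs to $M_2(\bbZ)$ and has nonzero determinant, and both properties are preserved under matrix multiplication (the determinant being multiplicative), so the submonoid generated by $GL(2,\bbZ)$ and the matrices $\begin{pmatrix} r & 0 \\ 0 & 1\end{pmatrix}$, $r\in\bbZ\smallsetminus\{0\}$, lies inside $M$; and it is a proper submonoid, not a subgroup, as is already witnessed by $\begin{pmatrix} 2 & 0 \\ 0 & 1\end{pmatrix}$, whose inverse is not integral.

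For the substantive inclusion, I would take $A\in M$, i.e.\ $A\in M_2(\bbZ)$ with $\det A\ne 0$, and invoke the Smith normal form --- equivalently, reduce $A$ by integral row and column operations, which amounts to multiplying on the left and right by elementary matrices of $GL(2,\bbZ)$ --- to write $A = U\begin{pmatrix} d_1 & 0 \\ 0 & d_2\end{pmatrix}V$ with $U,V\in GL(2,\bbZ)$ and $d_1,d_2$ positive integers; here $\det A\ne 0$ is exactly what forces the diagonal entries to be nonzero, and any sign can be absorbed into $U$. (I would not need the divisibility relation $d_1\mid d_2$.) It then remains to express the diagonal matrix in the allowed generators, which I would do via
\[
\begin{pmatrix} d_1 & 0 \\ 0 & d_2\end{pmatrix}
= \begin{pmatrix} d_1 & 0 \\ 0 & 1\end{pmatrix}
  \begin{pmatrix} 0 & 1 \\ 1 & 0\end{pmatrix}
  \begin{pmatrix} d_2 & 0 \\ 0 & 1\end{pmatrix}
  \begin{pmatrix} 0 & 1 \\ 1 & 0\end{pmatrix},
\]
using that $\begin{pmatrix} 0 & 1 \\ 1 & 0\end{pmatrix}\in GL(2,\bbZ)$ and that conjugation by it interchanges the diagonal entries. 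Putting this together, $A=UDV$ becomes a product of elements of $GL(2,\bbZ)$ and matrices of the required form.

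There is no serious obstacle here --- the statement is essentially a repackaging of Smith normal form. The only points to watch are the two just noted: that $\det A\ne 0$ is what legitimizes the diagonal entries as genuine generators of the monoid, and that the full strength of Smith normal form (including the divisibility chain) is more than one needs, since any factorization $A=UDV$ with $U,V$ invertible integral and $D$ diagonal suffices.
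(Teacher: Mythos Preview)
Your proof is correct and follows essentially the same route as the paper's. The paper carries out the reduction by hand --- applying the Euclidean algorithm to the first column to reach upper-triangular (Hermite) form, then factoring that triangular matrix --- whereas you invoke Smith normal form as a known result to go straight to diagonal form. Both arguments then finish with the same trick of conjugating by a permutation matrix in $GL(2,\bbZ)$ to move a diagonal entry into the upper-left slot. Your version is slightly slicker; the paper's has the minor virtue of being self-contained.
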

\begin{proof}
First we recall that applying an elementary row or column operation to
a matrix is the same as pre- or post-multiplying by an
elementary matrix of the
form $\begin{pmatrix} 1 & \star \\ 0 & 1\end{pmatrix}$ or
$\begin{pmatrix} 1 & 0 \\ \star & 1\end{pmatrix}$. So it will suffice
to show that, given any matrix $B\in M$, we can write it
as a product of matrices that reduce via elementary
row or column operations (over $\bbZ$) to things of the form
$\begin{pmatrix} \star & 0 \\ 0 & 1\end{pmatrix}$. The proof of this
is almost the same as for \cite[Theorem 2.3.2]{RosK}. Write $B =
\begin{pmatrix} b_{11} & b_{12} \\ b_{21} & b_{22}\end{pmatrix}$. Since $B$
is nonsingular, $b_{11}$ and $b_{21}$ can't both be $0$. Suppose
$b_{j1}$ is the smaller of the two in absolute value (or if the
absolute values are the same, choose $j=1$). Subtracting an integral
multiple of the $j$-th row from the other row, we can arrange to
decrease the minimal absolute value of the elements in the first
column. Proceeding this way and using the Euclidean algorithm, we can
reduce the first column to either $\begin{pmatrix} r\\ 0\end{pmatrix}$
or $\begin{pmatrix} 0\\ r\end{pmatrix}$ (with $r$ the greatest common
  divisor of the original $b_{11}$ and $b_{21}$). Since we can,
if necessary, left multiply by the elementary matrix
$\begin{pmatrix} 0 & 1 \\ -1& 0\end{pmatrix}$, we
can assume the first column has been reduced to $\begin{pmatrix} r\\
0\end{pmatrix}$, and thus that $B$ has been reduced to the form
\[
\begin{pmatrix} b_{11} & b_{12} \\ 0 & b_{22}\end{pmatrix}
= 
\begin{pmatrix} 1 & 0 \\ 0 & b_{22}\end{pmatrix}
\begin{pmatrix} 1 & b_{12} \\ 0 & 1\end{pmatrix}
\begin{pmatrix} b_{11} & 0 \\ 0 & 1\end{pmatrix}.
\]
And finally, $\begin{pmatrix} 1 & 0 \\ 0 & b_{22}\end{pmatrix}$ is
conjugate to $\begin{pmatrix} b_{22} & 0 \\ 0 & 1\end{pmatrix}$ under
  the elementary matrix $\begin{pmatrix} 0 & 1 \\ -1&
    0\end{pmatrix}$. 
\end{proof}
\begin{remark} 
\label{rem:Mgen}
We can relate this back to the proofs of Theorems
\ref{thm:unitalhomostoirrat} and \ref{thm:unitalhomostomatirrat}. 
Elements of $M$ lying in $GL(2,\bbZ)$ correspond to Morita equivalences
of irrational rotation algebras \cite[Theorem 4]{Rieffel1}.  Elements of
the form $\begin{pmatrix}
  r & 0 \\ 0 & 1\end{pmatrix}$ act on $\theta$ by multiplication by
  $r$, and correspond to inclusions $A_{r\theta} \hookrightarrow
  A_{\theta}$. Lemma \ref{lem:GL2Q} says that general elements of $M$
  are built out of these two cases. This motivates the following
  Theorem \ref{thm:homostoirrat}.
\end{remark}
\begin{remark} 
\label{rem:Hecke}
The appearance of the monoid $GL(2,\bbZ)\subset
  M\subset GL(2,\bbQ)$, and also the statement of Lemma
  \ref{lem:GL2Q}, are somewhat reminiscent of the theory of Hecke
  operators in the theory of modular forms, which also involve the
  action of the same monoid $M$ (on $GL(2,\bbR)/GL(2,\bbZ)$).
\end{remark}
\begin{theorem}
\label{thm:homostoirrat}
Fix $\Theta$ and $\theta$ in $(0,\,1)$, both irrational. 
Then there is a non-zero $*$-homomorphism $\varphi\co A_\Theta\to 
M_n(A_\theta)$ for some $n$, not necessarily unital, if and only if
$\Theta$ lies in the 
orbit of $\theta$ under the action of the monoid $M$ \textup{(of Lemma
\ref{lem:GL2Q})} on $\bbR$ by linear fractional transformations.
The possibilities for $\Tr(\varphi(1_{A_\Theta}))$ are precisely the
numbers $t = c\theta + d > 0$, $c,\,d\in \bbZ$ such that
$t\Theta\in \bbZ+ \theta\bbZ$. Once $t$ is chosen, $n$ can be taken to
be any integer $\ge t$.
\end{theorem}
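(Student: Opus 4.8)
The plan is to reduce to the unital case of Theorem~\ref{thm:unitalhomostomatirrat} by cutting $M_n(A_\theta)$ down by the projection $p=\varphi(1_{A_\Theta})$, as in the discussion preceding that theorem, and to extract the arithmetic from the induced map on $K_0$, just as in the proofs of Theorems~\ref{thm:unitalhomostoirrat} and \ref{thm:unitalhomostomatirrat}. It is convenient to prove the statement about $\Tr(\varphi(1_{A_\Theta}))$ first, since the characterization via the monoid $M$ then follows from an elementary fact about numbers: for irrational $\Theta,\theta$, the number $\Theta$ lies in the $M$-orbit of $\theta$ if and only if there are $c,d\in\bbZ$ with $t:=c\theta+d>0$ and $t\Theta\in\bbZ+\theta\bbZ$. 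Indeed, if $\Theta=\begin{pmatrix}a&b\\c&d\end{pmatrix}\cdot\theta$ with nonzero determinant, then $(c\theta+d)\Theta=a\theta+b\in\bbZ+\theta\bbZ$ and $c\theta+d\neq0$ (since $\theta$ is irrational), so after replacing the matrix by its negative if necessary we obtain such a $t$; conversely, if $t=c\theta+d>0$ and $t\Theta=c'\theta+d'$ with $c',d'\in\bbZ$, then $\Theta=\begin{pmatrix}c'&d'\\c&d\end{pmatrix}\cdot\theta$, and the determinant $c'd-cd'$ is nonzero because $\Theta$ is irrational.

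For necessity, let $\varphi\co A_\Theta\to M_n(A_\theta)$ be a non-zero $*$-homomorphism, put $p=\varphi(1_{A_\Theta})$, and set $t=\Tr(p)$, where $\Tr$ is the trace on $M_n(A_\theta)$ with $\Tr(1_n)=n$. Composing $\varphi_*\co K_0(A_\Theta)\to K_0(M_n(A_\theta))$ with the homomorphism $K_0(M_n(A_\theta))\to\bbR$ induced by $\Tr$ gives a positive group homomorphism $\bbZ+\Theta\bbZ\to\bbR$ which sends $1$ to $\Tr(p)=t$ and whose image lies in $\bbZ+\theta\bbZ$. Since a positive homomorphism from the dense subgroup $\bbZ+\Theta\bbZ$ of $\bbR$ into $\bbR$ is multiplication by a nonnegative real, this map is multiplication by $t$, so $t\bbZ+t\Theta\bbZ\subseteq\bbZ+\theta\bbZ$; hence $t=c\theta+d$ for some $c,d\in\bbZ$, $t\Theta\in\bbZ+\theta\bbZ$, and of course $t>0$ and $t\le\Tr(1_n)=n$. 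For the monoid statement, recall that $A_\theta$ is simple, so $p$ is full and $B:=pM_n(A_\theta)p$ is a full hereditary subalgebra of $M_n(A_\theta)$; by \cite[Corollary 2.6]{Rieffel2}, $B\cong M_m(A_\beta)$ for some $\beta$ in the $GL(2,\bbZ)$-orbit of $\theta$, which we may take in $(0,1)$. Applying Theorem~\ref{thm:unitalhomostomatirrat} to the unital map $A_\Theta\to M_m(A_\beta)\cong B$ gives $m\Theta=c''\beta+d''$ with $c''\neq0$, whence $\Theta=\begin{pmatrix}c''&d''\\0&m\end{pmatrix}\cdot\beta$ lies in the $M$-orbit of $\theta$.

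For sufficiency, fix $c,d\in\bbZ$ with $t=c\theta+d>0$ and $t\Theta\in\bbZ+\theta\bbZ$, and let $n$ be any integer with $n\ge t$. The range of $\Tr$ on projections of $M_n(A_\theta)$ is $(\bbZ+\theta\bbZ)\cap[0,n]$ (which follows from Rieffel's description of the corresponding range for $A_\theta$), so there is a projection $p\in M_n(A_\theta)$ with $\Tr(p)=t$. Put $B=pM_n(A_\theta)p\cong M_m(A_\beta)$ with $\beta$ in the $GL(2,\bbZ)$-orbit of $\theta$ and $\beta\in(0,1)$. Comparing the two descriptions of the range of the (unique) tracial state of $B$ on $K_0(B)$---namely $\tfrac1t(\bbZ+\theta\bbZ)$ when computed through $M_n(A_\theta)$ and $\tfrac1m(\bbZ+\beta\bbZ)$ when computed through $M_m(A_\beta)$---gives $\tfrac{t}{m}(\bbZ+\beta\bbZ)=\bbZ+\theta\bbZ$. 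Hence $m\Theta\in\bbZ+\beta\bbZ$ is equivalent to $t\Theta\in\bbZ+\theta\bbZ$, which holds, and writing $m\Theta=c'\beta+d'$ the coefficient $c'$ is nonzero since $m\Theta$ is irrational. By Theorem~\ref{thm:unitalhomostomatirrat} there is a unital $*$-homomorphism $A_\Theta\to M_m(A_\beta)\cong B$, and composing with the inclusion $B\hookrightarrow M_n(A_\theta)$ gives a non-zero $\varphi$ with $\Tr(\varphi(1_{A_\Theta}))=\Tr(p)=t$. Together with the elementary equivalence in the first paragraph, this completes the proof.

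I expect the main obstacle to be the bookkeeping in the Morita reduction: one must check that $p$ is full (so that $B$ is genuinely Morita equivalent to $A_\theta$ and $K_0(B)\to K_0(M_n(A_\theta))$ is an isomorphism), keep track of the trace normalization---so that the class of the identity of $M_m(A_\beta)$ corresponds to $m$ and not $1$---and correctly match $(m,\beta)$ with $(t,\theta)$ through the scaling factor $t/m$. The remaining ingredients, namely that a positive homomorphism between dense subgroups of $\bbR$ is multiplication by a scalar and the description of the range of the trace on projections, go exactly as in the arguments already given in Section~\ref{sec:morphisms}.
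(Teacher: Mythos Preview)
Your proof is correct and follows the same overall architecture as the paper's: necessity via the induced map on $K_0$ being multiplication by $t=\Tr(p)$, and sufficiency by cutting down $M_n(A_\theta)$ by a projection of trace $t$ to a corner $B\cong M_m(A_\beta)$ and invoking Theorem~\ref{thm:unitalhomostomatirrat}. The one substantive difference is in how you verify the hypothesis $m\Theta\in\bbZ+\beta\bbZ$ needed to apply Theorem~\ref{thm:unitalhomostomatirrat}. The paper identifies $m$ and $\beta$ explicitly---$m=\gcd(c,d)$, and $\beta$ obtained by completing $(c/m,\,d/m)$ to a $GL(2,\bbZ)$ matrix---and then checks $m\Theta\in\bbZ+\beta\bbZ$ by a direct matrix calculation. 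You instead compare the two descriptions of the range of the unique tracial state on $K_0(B)$ to obtain $\tfrac{t}{m}(\bbZ+\beta\bbZ)=\bbZ+\theta\bbZ$, from which $m\Theta\in\bbZ+\beta\bbZ$ follows immediately from $t\Theta\in\bbZ+\theta\bbZ$. This is a genuine streamlining: it avoids the explicit bookkeeping with $\gcd(c,d)$ and the auxiliary $GL(2,\bbZ)$ completion, at the cost of not producing concrete formulas for $m$ and $\beta$ (which are not needed for the statement anyway). Your opening reduction of the $M$-orbit condition to the arithmetic condition on $t$ also makes the logical structure a bit cleaner than in the paper, and renders the second derivation of the orbit condition in your necessity paragraph redundant, though harmless.
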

\begin{proof}
First suppose $\varphi$ exists, and let $p
=\varphi(1_{A_\Theta})$. Then 
\[
\varphi_*\co K_0(A_\Theta) \to 
K_0(M_n(A_\theta))=K_0(A_\theta)
\]
must be an injection of ordered
groups sending $1\in K_0(A_\Theta)$ to $t=\Tr(p)=c\theta + d\in
\bbZ+\theta\bbZ$. Since both groups are dense subgroups 
of $\bbR$, this map must be multiplication by $t$ and must send
$\Theta$ to something in $\bbZ + \theta\bbZ$. So we have
$t\Theta = a\theta + b$ for some $a,\,b\in\bbZ$, and 
\[
\Theta = \frac{a\theta + b}{c\theta + d}= 
\begin{pmatrix}a&b\\c&d\end{pmatrix}\cdot \theta.
\]
The matrix $\begin{pmatrix}a&b\\c&d\end{pmatrix}$ has integer entries,
  and can't be singular since the numerator and denominator are both
  non-zero (being $\Tr(\varphi(q))$ and $\Tr(\varphi(1))$,
  respectively, where $q$ is a Rieffel projection in $A_\Theta$ with
  trace $\Theta$), and $(a\ b)$ and $(c\ d)$ can't be rational multiples of
  each other (as that would imply $\Theta$ is rational).
Hence $\begin{pmatrix}a&b\\c&d\end{pmatrix}$ lies in $M$, and $t$ is
  as required. And since $p\le 1_n$, $t\le n$.

To prove the converse, suppose
$A=\begin{pmatrix}a&b\\c&d\end{pmatrix}\in M$ and
$(c\theta + d)\Theta = a\theta + b$. Let $t= c\theta + d$ and choose
any integer $n\ge t$. Since the range of the trace on projections in
$M_n(A_\theta)$ is $[0,\,n]\cap (\bbZ + \theta\bbZ)$, we can choose a
self-adjoint projection $p\in M_n(A_\theta)$ with $\Tr(p)=t$. The
subalgebra $p M_n(A_\theta) p$ of $M_n(A_\theta)$ is a full corner
(since $A_\theta$ is simple), hence is strongly Morita equivalent to
$A_\theta$, hence is $*$-isomorphic to $M_k(A_\beta)$ for some $\beta$
in the orbit of $GL(2,\bbZ)$ acting on $\theta$ \cite[Corollary
  2.6]{Rieffel2}. In fact, we can
compute $k$ and $\beta$; $k$ is the (positive) greatest common divisor
of $c$ and $d$, and $\beta$ is obtained by completing the row vector
$\bigl(\frac{c}{k}\ \frac{d}{k}\bigr)$ to a matrix 
\[
\begin{pmatrix}a'&b'\\ \frac{c}{k} & \frac{d}{k}\end{pmatrix} \in
GL(2,\bbZ) 
\]
and then letting this act on $\theta$. By Theorem
\ref{thm:unitalhomostomatirrat}, there is a $*$-homomorphism
$\varphi\co A_\Theta\to p M_n(A_\theta) p \cong M_k(A_\beta)$ with
$\varphi(1_{A_\Theta}) = p$ if and only if
$k\Theta\in\bbZ+\beta\bbZ$. But, by assumption,
\[
k\Theta = k \frac{a\theta + b}{c\theta + d}
= \frac{a\theta + b}{\frac{c}{k}\theta +\frac{d}{k}}
\]
while
\[
\beta = \frac{a'\theta + b'}{\frac{c}{k}\theta +\frac{d}{k}}
\quad \text{with} \quad
\begin{pmatrix}a'&b'\\ \frac{c}{k} & \frac{d}{k}\end{pmatrix} \in
GL(2,\bbZ).
\]

Note that the transpose matrix
\[
\begin{pmatrix}a'& \frac{c}{k} \\ b' & \frac{d}{k}\end{pmatrix}
\]
also lies in  $GL(2,\bbZ)$.
So we can we can solve for integers $r$ and $s$ such that
\[
\begin{pmatrix}a'& \frac{c}{k} \\ b' & \frac{d}{k}\end{pmatrix}
\begin{pmatrix}r \\ s\end{pmatrix} =
\begin{pmatrix}a \\ b\end{pmatrix}.
\]
That says exactly that
\[
\begin{aligned}
r\beta + s &= r\cdot \frac{a'\theta + b'}{\frac{c}{k}\theta
  +\frac{d}{k}} + s\\
&= \frac{r\bigl(a'\theta + b'\bigr) + s\bigl(\frac{c}{k}\theta
  +\frac{d}{k}\bigr)}{\frac{c}{k}\theta
  +\frac{d}{k}} \\
&= \frac{\bigl(a'r+\frac{c}{k}s\bigr)\theta +
\bigl(b'r+\frac{d}{k}s\bigr)}{\frac{c}{k}\theta
  +\frac{d}{k}} \\
&= \frac{a\theta + b}{\frac{c}{k}\theta
  +\frac{d}{k}}\\
&= k\Theta,
\end{aligned}
\]
as required.
\end{proof}

\section{Harmonic maps between noncommutative tori} 
\label{sec:harmonic}

\subsection{The action and some of its minima
for maps between noncommutative tori} 
\label{sec:action-tori}
In this section we consider the analogue of the action functional
\eqref{eq:classicalaction} in the context of the $*$-homomorphisms
classified in the last section. For simplicity, consider first of all
a unital $*$-homomorphism $\varphi\co A_\Theta \to A_\theta$ as in
Theorem \ref{thm:unitalhomostoirrat}. As before, denote the canonical
generators of $A_\Theta$ and $A_\theta$ by
$U$ and $V$, $u$ and $v$, respectively. The natural analogue of 
$S(g)$ in our situation is
\begin{equation}
\label{eq:qlag}
\begin{aligned}
\cL(\varphi) &= \Tr\Bigl(\delta_1(\varphi(U))^*\delta_1(\varphi(U)) +
\delta_2(\varphi(U))^*\delta_2(\varphi(U)) \\ &\qquad +
\delta_1(\varphi(V))^*\delta_1(\varphi(V)) +
\delta_2(\varphi(V))^*\delta_2(\varphi(V))\Bigr).
\end{aligned}
\end{equation}
(Except for a factor of two, this is the same as the sum of the 
``energies'' of the unitaries $\varphi(U)$ and $\varphi(V)$ in
$A_\theta$, as defined in \cite[\S5]{RosL}.) 
Here $\delta_1$ and $\delta_2$ are the infinitesimal generators for
the ``gauge action'' of the group $\bbT^2$ on $A_\theta$. More
precisely, $\delta_1$ and $\delta_2$ are defined on the smooth
subalgebra $A_\theta^\infty$ by the formulas
\begin{equation}
\label{eq:ders}
\delta_1(u) = 2\pi iu, \quad \delta_2(u)=0, \quad 
\delta_1(v) = 0, \quad \delta_2(v)=2\pi iv.
\end{equation}
The derivations $\delta_1$ and $\delta_2$ play the role of measuring partial
derivatives in the two coordinate directions in 
$A_\theta$ (which, we recall, plays the role of the worldsheet
$\Sigma$), the product of an operator with its
adjoint has replaced the norm squared, 
and integration over $\Sigma$ has been replaced by the
trace.
Note for example that if $\Theta=\theta$ and $\varphi=\Id$, the
identity map, then  we obtain
\[
\cL(\Id) = \Tr\Bigl(\delta_1(u)^*\delta_1(u) + 0 + 0 +
\delta_2(v)^*\delta_2(v)\Bigr) = 8\pi^2.
\]
More generally, for the $*$-automorphism $\varphi_A\co u\mapsto u^pv^q$,
$v\mapsto u^rv^s$, with $A=\begin{pmatrix}p&q\\r&s\end{pmatrix}\in
  SL(2,\bbZ)$, we obtain
\begin{equation}
\label{eq:minenergy}
\begin{aligned}
\cL( \varphi_A) &= \Tr\Bigl(\delta_1(u^pv^q)^*\delta_1(u^pv^q)
+ \delta_2(u^pv^q)^*\delta_2(u^pv^q) \\
&\quad + \delta_1(u^rv^s)^*\delta_1(u^rv^s)
+ \delta_2(u^rv^s)^*\delta_2(u^rv^s)\Bigr)\\
&= 4\pi^2\Bigl(p^2 + q^2 + r^2 + s^2\Bigr).
\end{aligned}
\end{equation}
\begin{conjecture}
\label{conj:minenergy}
The value \textup{\eqref{eq:minenergy}} of $\cL(\varphi_A)$ is
minimal among all $\cL(\varphi)$, $\varphi\co
A_\theta^\infty\circlearrowleft$
a $*$-endomorphism inducing the matrix $A\in SL(2,\bbZ)$ on
$K_1(A_\theta) \cong \bbZ^2$.
\end{conjecture}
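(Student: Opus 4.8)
The plan is to prove a \emph{topological lower bound}: for any unitary $w$ in $A_\theta^\infty$, the ``energy'' $E(w):=\Tr\bigl(\delta_1(w)^*\delta_1(w)\bigr)+\Tr\bigl(\delta_2(w)^*\delta_2(w)\bigr)$ is at least $4\pi^2$ times the squared Euclidean length of the class $[w]\in K_1(A_\theta)\cong\bbZ^2$, with equality when $w$ is a monomial. Granting this, and writing $\cL(\varphi)=E(\varphi(U))+E(\varphi(V))$, the hypothesis that $\varphi$ induces $A=\begin{pmatrix}p&q\\r&s\end{pmatrix}$ on $K_1$ forces $[\varphi(U)]=p[u]+q[v]$ and $[\varphi(V)]=r[u]+s[v]$, so $\cL(\varphi)\ge 4\pi^2(p^2+q^2+r^2+s^2)=\cL(\varphi_A)$ by \eqref{eq:minenergy}, which is the assertion.

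First I would make precise the identification of $[w]$ with ``winding numbers.'' For a unitary $w\in A_\theta^\infty$ set $n_j(w):=\tfrac{1}{2\pi i}\Tr\bigl(w^*\delta_j(w)\bigr)$. Because each $\delta_j$ is a $*$-derivation annihilating $1$, one checks that $\Tr\bigl(w^*\delta_j(w)\bigr)$ is purely imaginary, that $n_j(w_1w_2)=n_j(w_1)+n_j(w_2)$, and that $n_j(e^{ih})=0$ for self-adjoint $h$ (a Duhamel computation, as in the de la Harpe--Skandalis determinant); with the unnormalized matrix trace the same formula extends to unitaries over $A_\theta^\infty$, so $n_j$ descends to a homomorphism $K_1(A_\theta^\infty)\to\bbR$. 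From \eqref{eq:ders} one gets $n_j([u])=\delta_{1j}$, $n_j([v])=\delta_{2j}$; invoking Connes' computation that $K_1(A_\theta^\infty)\cong K_1(A_\theta)\cong\bbZ^2$ with generators $[u],[v]$, it follows that $(n_1,n_2)\co K_1(A_\theta)\to\bbZ^2$ is an isomorphism, in particular integer-valued and equal to the $K_1$-class in the basis $([u],[v])$.

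Then the bound is just Cauchy--Schwarz in the GNS space $L^2(A_\theta,\Tr)$ with $\langle a,b\rangle=\Tr(a^*b)$: since $\Tr(w^*w)=1$,
\[
4\pi^2 n_j(w)^2=\bigl|\langle w,\delta_j(w)\rangle\bigr|^2\le \langle w,w\rangle\,\langle\delta_j(w),\delta_j(w)\rangle=\Tr\bigl(\delta_j(w)^*\delta_j(w)\bigr).
\]
Summing over $j=1,2$ gives $E(w)\ge 4\pi^2\bigl(n_1(w)^2+n_2(w)^2\bigr)$. Equality in Cauchy--Schwarz forces $\delta_j(w)$ to be a scalar multiple of $w$ for each $j$, which happens exactly when $w$ is a unimodular scalar times a monomial $u^mv^n$ (so $\delta_1(w)=2\pi im\,w$, $\delta_2(w)=2\pi in\,w$); in particular it holds for $\varphi_A(U)=u^pv^q$ and $\varphi_A(V)=u^rv^s$, so the bound is attained and $\varphi_A$ is a minimizer (indeed the above also characterizes the minimizers).

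The step needing the most care is the \emph{non-unital} case. If $e:=\varphi(1_{A_\theta})$ is a proper projection, then $\varphi(U),\varphi(V)$ are unitaries of the corner $eA_\theta^\infty e$, not of $A_\theta^\infty$, so one must verify that $\tfrac{1}{2\pi i}\Tr\bigl(\varphi(U)^*\delta_j\varphi(U)\bigr)$ still records the $j$-th coordinate of $\varphi_*[U]$; I would do this either via Morita-invariance of $HC^1$ and the classification of corners of $A_\theta$ used in Theorem \ref{thm:unitalhomostomatirrat}, or directly by replacing $\varphi(U)$ with the $A_\theta$-unitary $\varphi(U)+(1-e)$ and checking, using $e\delta_j(e)e=0$, that the quantity is unchanged. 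The only effect of non-unitality is that the Cauchy--Schwarz constant becomes $1/\Tr(e)\ge 1$, so this case is in fact \emph{no worse} than the unital one; the remaining routine points are simply the folklore that the winding-number cocycles form a $\bbZ$-basis of $\Hom(K_1(A_\theta),\bbZ)$ (i.e.\ Connes' $K$-theory computation) and that $n_j$ annihilates the identity component of the unitary group.
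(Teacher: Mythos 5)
Your argument is correct, and it actually proves the full conjecture, which the body of the paper does not: the paper supports the statement only with two special cases in \S\ref{sec:action-tori} (endomorphisms sending $u$ to a scalar multiple of itself, handled by reduction to a classical one-dimensional Dirichlet problem, and $*$-automorphisms under Elliott's Diophantine condition in Theorem~\ref{thm:varhomphi}, handled by writing $\varphi$ as a gauge-and-inner perturbation of $\varphi_A$ and completing a square), and defers the general case to Hanfeng Li's appendix \cite{HLi}. Your route \mhy{} pairing each generator against the winding-number functionals $n_k(w)=\tfrac{1}{2\pi i}\Tr\bigl(w^*\delta_k(w)\bigr)$ and then applying Cauchy--Schwarz in $L^2(A_\theta,\Tr)$ to get $\Tr\bigl(\delta_k(w)^*\delta_k(w)\bigr)\ge 4\pi^2 n_k(w)^2$ \mhy{} is essentially the argument of that appendix, and it is strictly stronger than Theorem~\ref{thm:varhomphi}: it needs no Diophantine hypothesis and no a priori structure theory for $\varphi$, only homotopy invariance of $n_k$ (which follows from $\Tr\circ\delta_k=0$, i.e.\ the ``integration by parts'' lemma the paper already cites) together with the standard facts that $[u],[v]$ generate $K_1(A_\theta)\cong\bbZ^2$ and that $U(A_\theta^\infty)/U_0(A_\theta^\infty)\to K_1(A_\theta)$ is an isomorphism; spectral invariance of $A_\theta^\infty$ disposes of the smooth-versus-norm homotopy point you use implicitly. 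Your equality analysis recovers, without the Diophantine condition, the characterization of minimizers in Theorem~\ref{thm:varhomphi} (each $\varphi(u)$, $\varphi(v)$ a unimodular multiple of the corresponding monomial), and your treatment of the non-unital case \mhy{} verifying $\Tr\bigl(w^*\delta_j(w)\bigr)=\Tr\bigl((w+1-e)^*\delta_j(w+1-e)\bigr)$ via $e\delta_j(e)e=0$ and $\Tr\bigl(e\delta_j(e)\bigr)=0$, then noting the Cauchy--Schwarz constant only improves to $1/\Tr(e)\ge1$ \mhy{} is sound. What the paper's method buys that yours does not is the structural description of $\varphi$ as $\Ad(w)$ composed with a gauge twist of $\varphi_A$; what yours buys is generality, brevity, and a topological explanation of the lower bound.
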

Note that this conjecture is a close relative of \cite[Conjecture
  5.4]{RosL}, which deals with maps $C(S^1)\to A_\theta$ instead of
maps $A_\Theta\to A_\theta$. That conjecture said that the multiples
of $u^mv^n$ minimize the energy of the unitaries in their connected
components. Since $\cL(\varphi)$ is twice the sum of the energies of
$\varphi(U)$ and $\varphi(V)$, \cite[Conjecture
  5.4]{RosL} immediately implies the present conjecture.
The following results provide support for Conjecture \ref{conj:minenergy}.
\begin{theorem}
Conjecture \textup{\ref{conj:minenergy}} is true if $\varphi\co
A_\theta^\infty\circlearrowleft$ maps $u$ to a scalar multiple
of itself. {\lp}In
this case, $p=s=1$ and $q=0$.{\rp} The minimum is achieved precisely
when $\varphi(v) = \lambda u^r v$, $\lambda\in \bbT$.
\end{theorem}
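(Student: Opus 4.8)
The plan is to collapse what looks like an infinite-dimensional minimization into a classical one-dimensional variational problem. First I would extract the constraints forced by the hypothesis. Since $\varphi$ is a $*$-homomorphism and $\varphi(u)=\lambda u$, necessarily $\lambda\in\bbT$; then $\varphi_*[u]=[u]$ in $K_1(A_\theta)\cong\bbZ^2$, so in the parametrization of \eqref{eq:minenergy} we must have $p=1$, $q=0$, and then $\det A=1$ gives $s=1$, so the target value is $4\pi^2(r^2+2)$. Moreover $\delta_2(\lambda u)=0$ and $\Tr\bigl(\delta_1(\lambda u)^*\delta_1(\lambda u)\bigr)=4\pi^2$, so the contribution of $\varphi(u)$ to $\cL(\varphi)$ equals $4\pi^2$ regardless of $\lambda$. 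It therefore suffices to show that $w:=\varphi(v)$ contributes at least $4\pi^2(r^2+1)$, with equality exactly when $w$ has the asserted form.

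Next I would pin down the shape of $w$. Writing $w=\sum_{m,n}c_{mn}u^mv^n$ with rapidly decreasing coefficients (as $w\in A_\theta^\infty$) and imposing $\varphi(u)\varphi(v)=e^{2\pi i\theta}\varphi(v)\varphi(u)$, which reduces to $uwu^{-1}=e^{2\pi i\theta}w$, one gets $c_{mn}\bigl(e^{2\pi i\theta n}-e^{2\pi i\theta}\bigr)=0$ for all $m,n$; irrationality of $\theta$ forces $c_{mn}=0$ whenever $n\ne 1$, so $w=f(u)v$ with $f(u)=wv^*\in A_\theta^\infty\cap C^*(u)=C^\infty(S^1)$. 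Unitarity of $w$ makes $f(u)$ a unitary of the commutative subalgebra $C^*(u)$, i.e.\ $f\colon S^1\to\bbT$ is smooth; and $[w]=r[u]+[v]$, together with $[w]=[f(u)]+[v]$ and the identification of $[f(u)]$ with $(\deg f)[u]$, forces $\deg f=r$. Conversely every smooth $f\colon S^1\to\bbT$ of degree $r$ does arise this way (the pair $\lambda u,\ f(u)v$ satisfies the rotation relation and, by smoothness and the universal property, extends to a $*$-endomorphism of $A_\theta^\infty$ inducing $A$).

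With $w=f(u)v$ the derivations act diagonally, $\delta_1(u^mv)=2\pi i\,m\,u^mv$ and $\delta_2(u^mv)=2\pi i\,u^mv$, so a short computation (using $\sum_m\abs{c_m}^2=1$ by unitarity) gives
\[
\Tr\bigl(\delta_1(w)^*\delta_1(w)+\delta_2(w)^*\delta_2(w)\bigr)=4\pi^2\Bigl(\sum_m m^2\abs{c_m}^2+1\Bigr)=4\pi^2\Bigl(\frac1{2\pi}\int_{S^1}\abs{f'}^2+1\Bigr).
\]
The remaining ingredient is the classical fact that among smooth maps $f\colon S^1\to\bbT$ of degree $r$ the Dirichlet energy $\frac1{2\pi}\int_{S^1}\abs{f'}^2=\sum_m m^2\abs{c_m}^2$ is minimized, with value $r^2$, exactly by the monomials $f(z)=\mu z^r$, $\mu\in\bbT$: lifting $f=e^{i\psi}$ to $\widetilde\psi\colon\bbR\to\bbR$ with $\widetilde\psi(t+2\pi)=\widetilde\psi(t)+2\pi r$ and applying Cauchy--Schwarz to $\frac1{2\pi}\int_0^{2\pi}\widetilde\psi\,'\,dt=r$ does it, with equality iff $\widetilde\psi\,'\equiv r$. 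Adding back the $4\pi^2$ from $\varphi(u)$ yields $\cL(\varphi)\ge 4\pi^2(r^2+2)=4\pi^2(p^2+q^2+r^2+s^2)$, with equality iff $w=\mu u^rv$, i.e.\ $\varphi(v)=\mu u^rv$ for some $\mu\in\bbT$, which is the claimed form.

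The only analytic input is the last Dirichlet-energy inequality, and it is elementary; the real mechanism of the argument is that the rotation relation plus irrationality of $\theta$ pins $\varphi(v)$ down to $f(u)v$, turning the problem into the classical minimization over degree-$r$ self-maps of the circle. I do not anticipate a genuine obstacle; the points needing a word of care are the identification $A_\theta^\infty\cap C^*(u)=C^\infty(S^1)$ and the verification that prescribing the smooth unitaries $\lambda u$ and $f(u)v$ subject to the rotation relation really defines a $*$-endomorphism of $A_\theta^\infty$, so that the family described above is precisely the family being minimized over.
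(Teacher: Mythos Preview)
Your proof is correct and follows essentially the same route as the paper's: reduce via the rotation relation to $\varphi(v)=f(u)v$ with $f\colon\bbT\to\bbT$ smooth of winding number $r$, compute the energy as $4\pi^2\bigl(2+\Vert f'\Vert_{L^2}^2\bigr)$, and then minimize the Dirichlet energy of a degree-$r$ circle map. The only cosmetic differences are that you extract $f$ via a Fourier-coefficient argument (the paper uses that $C^*(u)$ is maximal abelian to get $wv^*\in C^*(u)$) and you finish with Cauchy--Schwarz on the lift (the paper writes $f(t)=f(0)+rt+g(t)$ and expands directly); these are equivalent.
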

\begin{proof}
Let $\varphi(u) = \mu u$ and 
$\varphi(v)=w$, where $\mu \in \bbT$, $w$ is unitary and
smooth, and (necessarily) $uw=e^{2\pi i\theta}wu$. Since also $uv =
e^{2\pi i\theta} vu$, it follows that $wv^*$ is a unitary commuting with
$u$. Since the $*$-subalgebra generated by $u$ is maximal abelian,
that implies that $w = f(u)v$, where $f\co \bbT\to \bbT$ is
continuous, and the parameter $r$ is the winding number of $f$. Now we
compute that $\delta_1(f(u)v) = 2\pi i f'(u)uv$, $\delta_2(f(u)v) =
2\pi i f(u) v$, and hence:
\begin{equation}
\label{eq:enofphi}
\begin{aligned}
\cL(\varphi) &= \Tr\Bigl(\delta_1(u)^*\delta_1(u) +
\delta_2(u)^*\delta_2(u) \\ &\qquad +
\delta_1(f(u)v)^*\delta_1(f(u)v) +
\delta_2(f(u)v)^*\delta_2(f(u)v)\Bigr)\\
&= 4\pi^2\Tr\Bigl(2 + v^*u^*f'(u)^*f'(u)uv\Bigr)\\
&= 4\pi^2\Tr\Bigl(2 + f'(u)^*f'(u)\Bigr).
\end{aligned}
\end{equation}
We can pull $f\co \bbT\to \bbT$ back to a function $[0,\, 1] \to
\bbR$ via the covering map $z=e^{2\pi i t}$, and then the
winding number of $f$ (as a self-map of $\bbT$) translates into the
difference $f(1)-f(0)$ (for $f$ defined on $[0,\, 1]$).
The problem of minimizing \eqref{eq:enofphi} is thus the same as that
of minimizing $\int_0^1\vert f'(t)\vert^2 \,dt$ in the class of smooth
functions $f\co [0,\,1] \to \bbR$ with $f(1)-f(0)=r$. Since such a
function can be written as $f(t) = f(0) + tr + g(t)$, with
$g(0)=g(1)=0$, and $f'(t) = r + g'(t)$, we have
\[
\int_0^1\vert f'(t)\vert^2 \,dt =
\int_0^1 \bigl( r^2 + 2 r g'(t) + g'(t)^2 \bigr) \,dt =
r^2 + \Vert g'\Vert^2_{L^2} \ge r^2,
\]
with equality exactly when $g'\equiv 0$, i.e., $g$ constant, and thus
$g\equiv 0$ since $g(0)=0$. Thus equality occurs when (going back to
the original notation) $f(u)=\lambda u^r$, i.e., $\varphi(v) = \lambda
u^r v$, for some constant $\lambda\in\bbT$.
\end{proof}

We now give a complete proof of Conjecture \ref{conj:minenergy} 
for $*$-automorphisms,
in the case where the Diophantine condition of \cite{Elliott} is
satisfied. The same proof works in general modulo a technical
point which we will discuss below.
\begin{theorem}
\label{thm:varhomphi}
Conjecture \textup{\ref{conj:minenergy}} is true for $*$-automorphisms,
assuming the Diophantine condition of \textup{\cite{Elliott}}
is satisfied. In other words, if $\varphi$ is an $*$-automorphism
of $A_\theta^\infty$ inducing the map given by $A\in SL(2,\bbZ)$
on $K_1(A_\theta)$, and if $\theta$ satisfies the Diophantine 
condition of \textup{\cite{Elliott}}, then
\[
\cL(\varphi)\ge \cL(\varphi_A),
\]
with equality if and only if $\varphi(u)=\lambda \varphi_A(u)$,
$\varphi(v)=\mu \varphi_A(v)$, for some $\lambda,\mu\in\bbT$. 
\end{theorem}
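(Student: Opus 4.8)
The plan is to reduce the automorphism case to the previous theorem by exploiting the Elliott structure theorem for $\Aut(A_\theta^\infty)$ under the Diophantine hypothesis. Under that hypothesis one has $\Aut(A_\theta^\infty) \cong (U(A_\theta^\infty)_0/\bbT) \rtimes (\bbT^2 \rtimes SL(2,\bbZ))$, so every $*$-automorphism $\varphi$ inducing $A = \begin{pmatrix} p & q \\ r & s\end{pmatrix}$ on $K_1$ can be written as $\varphi = \Inn_w \circ \tau_{(z_1,z_2)} \circ \varphi_A$, where $\Inn_w$ is conjugation by a unitary in $A_\theta^\infty$ connected to the identity, $\tau_{(z_1,z_2)}$ is the gauge automorphism $u \mapsto z_1 u$, $v \mapsto z_2 v$, and $\varphi_A$ is the distinguished automorphism from \eqref{eq:minenergy}. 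First I would check that $\cL$ is invariant under the gauge automorphisms $\tau_{(z_1,z_2)}$ (immediate, since $\delta_i$ commutes with $\tau$ up to the scalar $z_i$ and $|z_i|=1$) and under the $SL(2,\bbZ)$-part in the sense that $\cL(\varphi_A) = 4\pi^2(p^2+q^2+r^2+s^2)$ is exactly the value already computed. So the entire content is to show that precomposing or postcomposing with an inner automorphism $\Inn_w$, $w \in U(A_\theta^\infty)_0$, can only increase $\cL$, with equality forcing $w$ to be a scalar.

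The key computation is therefore the behavior of the energy under $u \mapsto w\,\varphi_A(u)\,w^*$, $v \mapsto w\,\varphi_A(v)\,w^*$. Writing $a = \varphi_A(u)$, $b = \varphi_A(v)$ (smooth unitaries realizing the $SL(2,\bbZ)$-transform), one has $\delta_i(waw^*) = \delta_i(w)\,aw^* + w\,\delta_i(a)\,w^* + wa\,\delta_i(w^*)$, and using $\delta_i(w^*) = -w^*\delta_i(w)w^*$ together with the trace property and unitarity of $w$, $a$, I would expand $\Tr\bigl(\delta_i(waw^*)^*\delta_i(waw^*)\bigr)$ and collect terms. The natural object that appears is the $\delta_i$-logarithmic derivative $\xi_i := w^*\delta_i(w)$, an element of $A_\theta^\infty$; one finds
\[
\Tr\bigl(\delta_i(waw^*)^*\delta_i(waw^*)\bigr)
= \Tr\bigl(\delta_i(a)^*\delta_i(a)\bigr) + 2\,\Re\Tr\bigl(\delta_i(a)^*[\xi_i,a]\bigr) + \Tr\bigl([\xi_i,a]^*[\xi_i,a]\bigr),
\]
and similarly for $b$. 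Summing over $i=1,2$ and over the two generators, the leading term reproduces $\cL(\varphi_A)$, the quadratic term $\sum_i \Tr\bigl([\xi_i,a]^*[\xi_i,a]+[\xi_i,b]^*[\xi_i,b]\bigr)$ is manifestly $\ge 0$, and the cross term must be shown to vanish. I expect the cross term to vanish by a summation/cancellation argument: because $\varphi_A$ preserves the $\bbT^2$-action up to the linear action of $A$ on the torus, one can pass to Fourier coefficients (expand $w = \sum_{m,n} c_{m,n} u^m v^n$, $\xi_i$ correspondingly) and verify that the first-order variation of $\cL$ at the critical configuration $\varphi_A$ in the direction of an inner perturbation is zero — equivalently, that $\varphi_A$ satisfies the Euler–Lagrange equations of Proposition \ref{prop:ELeq}, so the linear term drops out by criticality. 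Then $\cL(\varphi) \ge \cL(\varphi_A)$, and equality forces $[\xi_i,a] = [\xi_i,b] = 0$ for $i=1,2$; since $a,b$ generate $A_\theta^\infty$ and the center is $\bbC$, each $\xi_i$ is a scalar, hence $\delta_i(w) = \xi_i w$ with $\xi_i \in \bbC$, which (as $w$ is unitary, $\delta_i$ a $*$-derivation) forces $\xi_i = 0$ and $w \in \bbT$. Unwinding the decomposition, $\varphi$ then differs from $\varphi_A$ only by a gauge automorphism, i.e. $\varphi(u) = \lambda\varphi_A(u)$, $\varphi(v) = \mu\varphi_A(v)$.

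The main obstacle is twofold. First, the cross-term vanishing: I need to be careful that the first-order variation genuinely vanishes rather than merely being controllable — this is where I would either invoke the Euler–Lagrange equations directly (checking $\varphi_A$ is a critical point, which should be a short Fourier-coefficient computation since $\delta_i(u^pv^q) = 2\pi i(p\,\text{-component})u^pv^q + \cdots$ has a very rigid form) or integrate the inner perturbation into a one-parameter family $w_t = \exp(t\eta)$ and differentiate at $t=0$. Second, removing the Diophantine condition: as the theorem statement concedes, the argument works "in general modulo a technical point," namely whether every smooth automorphism inducing a given $A \in SL(2,\bbZ)$ still factors through an inner part times $\varphi_A$ times a gauge transformation when Elliott's structure theorem is unavailable; absent that, one at least knows by \cite{ER} that the automorphism is inner modulo $SL(2,\bbZ)$ at the $C^*$-level, but smoothness of the inner-part unitary $w$ is the gap. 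I would flag this explicitly as the reason for the hypothesis, exactly as the surrounding text does, and note that the energy inequality itself (the quadratic-plus-criticality argument) needs only $w \in U(A_\theta^\infty)_0$ and is independent of the Diophantine condition.
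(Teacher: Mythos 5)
Your proposal is correct and is essentially the paper's own proof: the same Elliott decomposition $\varphi=\Inn_w\circ\tau\circ\varphi_A$ with $w\in U(A_\theta^\infty)_0$, and your expansion $\delta_i(waw^*)=w\bigl(\delta_i(a)+[\xi_i,a]\bigr)w^*$ with vanishing linear term (immediate from $\delta_i(a)=ca$, $c$ scalar, and trace invariance) and nonnegative quadratic term is algebraically identical to the paper's completion of the square $\Tr\bigl((W^*T+T^*W^*)^*(W^*T+T^*W^*)\bigr)$ with $T=\delta_i(w)^*w$, since $a^*[\xi_i,a]a^*=-(W^*T+T^*W^*)$. The one soft spot is your parenthetical reason that the scalar $\xi_i$ must vanish \mhy{} unitarity of $w$ and $\delta_i$ being a $*$-derivation do not suffice (e.g.\ $u^*\delta_1(u)=2\pi i$); you need that $w$ lies in the connected component $U(A_\theta^\infty)_0$, or else observe that $\delta_i(w)=\xi_iw$ forces $w$ to be a monomial $u^mv^n$, conjugation by which is again a gauge automorphism \mhy{} but this is the same wrinkle present in the paper's own Fourier-coefficient finish and does not affect the conclusion.
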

\begin{proof}
What we use from \cite{Elliott} is that the hypothesis on $\theta$
ensures that we can write
$\varphi(u)=\lambda w \varphi_A(u) w^*$,
$\varphi(v)=\mu w\varphi_A(v) w^*$, for some $\lambda,\mu\in\bbT$
and for some unitary $w\in A_\theta^\infty$. Suppose
$A = \begin{pmatrix}p&q\\r&s\end{pmatrix}\in
SL(2,\bbZ)$. Since $\cL(\varphi)$ is a sum of four terms, all of
which have basically the same form, it will be enough to estimate
the first term; the estimate for the other three is precisely analogous.
We find that
\begin{multline*}
\delta_1(\varphi(u))=\delta_1(\lambda w u^p v^q w^*) \\
= \lambda\bigl(\delta_1(w)  u^p v^q w^* + w 2 \pi i p  
u^p v^q w^* + w u^p v^q \delta_1(w)^* \bigr),
\end{multline*}
so the first term in $\cL(\varphi)$, $\Tr\bigl(\delta_1(\varphi(u))^*
\delta_1(\varphi(u)) \bigr)$ is a sum of nine terms, three
``principal'' terms and six ``cross'' terms.  Note that
$\overline\lambda$ in $\delta_1(\varphi(u))^*$ cancels the
$\lambda$ in $\delta_1(\varphi(u))$, so we can ignore the $\lambda$
altogether. The three principal terms are
\begin{equation}
\label{eq:prin}
\begin{aligned}
&\Tr\Bigl(
\bigl(\delta_1(w)  u^p v^q w^* \bigr)^*
\bigl(\delta_1(w)  u^p v^q w^* \bigr) \\ & \qquad +
\bigl( w  2 \pi i p u^p v^q w^*\bigr)^*
\bigl( w  2 \pi i p u^p v^q w^*\bigr) \\ & \qquad +
\bigl( w u^p v^q \delta_1(w)^* \bigr)^*\bigl( w u^p v^q \delta_1(w)^* \bigr)
\Bigr)\\ &=
2 \Tr\bigl( \delta_1(w)^* \delta_1(w) \bigr) + 4\pi^2 p^2,
\end{aligned}
\end{equation}
where in the last step we have used (several times)
the invariance of the trace under inner automorphisms.

Now consider the six cross-terms. These are
\begin{equation}
\label{eq:cross}
\begin{aligned}
&\Tr\Bigl(
\bigl(\delta_1(w)  u^p v^q w^* \bigr)^*
\bigl( w  2 \pi i p \,u^p v^q w^*\bigr) \\ & \qquad +
\bigl(\delta_1(w)  u^p v^q w^* \bigr)^*
\bigl( w u^p v^q \delta_1(w)^* \bigr)
\\ & \qquad +
\bigl( w  2 \pi i p\, u^p v^q w^*\bigr)^*
\bigl(\delta_1(w)  u^p v^q w^* \bigr)
\\ & \qquad +
\bigl( w  2 \pi i p \, u^p v^q w^*\bigr)^*
\bigl( w u^p v^q \delta_1(w)^* \bigr)
\\ & \qquad +
\bigl( w u^p v^q \delta_1(w)^* \bigr)^*
\bigl(\delta_1(w)  u^p v^q w^* \bigr)
\\ & \qquad +
\bigl( w u^p v^q \delta_1(w)^* \bigr)^*
\bigl( w  2 \pi i p\, u^p v^q w^*\bigr) \Bigr) \\ & =
\Tr\Bigl(
2 \pi i p \,\delta_1(w)^* w + w \bigl( u^p v^q \bigr)^*
\delta_1(w)^*  w u^p v^q \delta_1(w)^*  \\ & \qquad 
- 2 \pi i p \, w^*\delta_1(w) - 2 \pi i p \, w \, \delta_1(w)^* \\ & \qquad
+ \delta_1(w)\bigl( u^p v^q \bigr)^* w^* \delta_1(w)  u^p v^q w^*
+  2 \pi i p \, \delta_1(w) w^* \Bigr)\\ 
&=  \Tr\Bigl(
w \bigl( u^p v^q \bigr)^*
\delta_1(w)^*  w u^p v^q \delta_1(w)^*  \\ & \qquad 
+ \delta_1(w)\bigl( u^p v^q \bigr)^* w^* \delta_1(w)  u^p v^q w^*
\Bigr).
\end{aligned}
\end{equation}
(Note the use of ``integration by parts,'' \cite[Lemma 2.1]{RosL}.)
Now we put \eqref{eq:prin} and \eqref{eq:cross} together. We obtain
\[
\begin{aligned}
\Tr\bigl(\delta_1(\varphi(u))^*
\delta_1(\varphi(u)) \bigr) &= 4\pi^2 p^2 +
\Tr \Bigl( 2\delta_1(w)^* \delta_1(w)\\ & \qquad \qquad 
+ w \bigl( u^p v^q \bigr)^*
\delta_1(w)^*  w u^p v^q \delta_1(w)^* \\ & \qquad \qquad 
+ \delta_1(w)\bigl( u^p v^q \bigr)^* w^* \delta_1(w)  u^p v^q w^*
\Bigr).
\end{aligned}
\]
We make the substitutions $T=\delta_1(w)^* w$ and $W=u^pv^q$. Note
that $W$ is unitary. We obtain
\[
\begin{aligned}
\Tr\bigl(\delta_1(\varphi(u))^*
\delta_1(\varphi(u)) \bigr) &= 4\pi^2 p^2 +
\Tr \Bigl( T T^* + T^* T + W^* T  W T + T^* W^* T^* W\Bigr) \\
&\text{(using invariance of the trace under cyclic permutations)}\\
&= 4\pi^2 p^2 + \Tr \Bigl( T^* W W^* T + T^* W T^* W^*
+ W T W^* T \\
& \qquad\qquad \qquad + W T T^* W^*\Bigr)\\
&= 4\pi^2 p^2 + \Tr \Bigl( \bigl( W^* T + T^* W^* \bigr)^*
\bigl( W^* T + T^* W^* \bigr) \Bigr)\\
& \ge 4\pi^2 p^2.
\end{aligned}
\]
Furthermore, equality holds only if $W^* T + T^* W^* = 0$, i.e.,
$T = - W T^* W^*$.
Similar estimates with the other three terms in the energy show that
$\cL(\varphi) \ge \cL(\varphi_A) = 4\pi^2 \bigl( p^2 + q^2 + r^2 + s^2
\bigr)$, with equality only if $\delta_j(w)^* w = -  W w^* \delta_j(w) W^*$
and $\delta_j(w)^* w = -  W_1 w^* \delta_j(w) W_1^*$, where $W_1
= u^r v^s$.  (The conditions involving $W_1$ come from the analysis
of the last two terms in $\cL(\varphi)$, which use the
\emph{second} row of the matrix $A$.)  So if equality holds,
$W$ and $W_1$ both conjugate $w^* \delta_j(w)$ to 
the negative of its adjoint.  In
particular, $w^* \delta_j(w)$ commutes with $W^*W_1$. But this
unitary generates a maximal abelian subalgebra, so $w^* \delta_j(w)$ 
is a function $f$ of $W^*W_1$. So $w^* \delta_j(w) = f(W^*W_1)$ with
$W f(W^* W_1) W^* = - f(W^*W_1)^*$.  One can check that these equations
can be satisfied only if $f=0$. Indeed, we have the commutation
relation $W W_1 = e^{2\pi i \theta}W_1 W$, so
\[
W (W^* W_1)^n W^* = (W_1 W^*)^n = e^{2\pi i n \theta} (W^* W_1)^n.
\]
If we expand $f$ in a Fourier series, $f(W^* W_1) = \sum_n 
c_n (W^* W_1)^n $, then we must have
\[
\begin{aligned}
- f(W^*W_1)^* &= - \sum_n \overline{c_n} (W^* W_1)^{-n}
= - \sum_n \overline{c_{-n}} (W^* W_1)^{n} \\
& = \sum_n c_n
W (W^* W_1)^n W^*  = \sum_n c_n e^{2\pi i n \theta} 
(W^* W_1)^n.
\end{aligned}
\]
Equating coefficients gives 
\[
- \overline{c_{-n}} =  c_n e^{2\pi i n \theta}, \quad
\text{and replacing $n$ by $-n$,}\quad
- \overline{c_n} =  c_{-n} e^{-2\pi i n \theta}.
\]
These give 
\[
-c_{-n} = \overline{c_n}  e^{- 2\pi i n \theta}
= -  c_{-n} e^{-4\pi i n \theta},
\]
so all $c_n$ must vanish for $n\ne 0$. Thus $f$ is a constant
equal to its negative, i.e., $f=0$,
so $\delta_1(w)=0$ and $\delta_2(w)=0$,
$w$ is a scalar, and $\varphi$ differs from $\varphi_A$ only by a 
gauge transformation. That completes the proof.
\end{proof}
\begin{remark}
\label{rem:nonDio}
Note that the same proof always shows that $\cL(\varphi_A)
\le \cL(\varphi)$ for any $\varphi$ in the orbit of $\varphi_A$
under gauge automorphisms and inner automorphisms, and thus, by
continuity, under automorphisms in the closure 
(in the topology of pointwise $C^\infty$ convergence) of the inner
automorphisms.  So if the conjecture of Elliott that
$\Aut(A_\theta^\infty) =\Innbar(A_\theta^\infty)\rtimes SL(2,\bbZ)$ 
mentioned earlier
is true, the Diophantine condition in Theorem \ref{thm:varhomphi} is
unnecessary.
\end{remark}
\begin{remark}
After the first draft of this paper was written, Hanfeng Li succeeded
in proving \cite[Conjecture  5.4]{RosL} and Conjecture
\ref{conj:minenergy}  (in complete generality). His solution is given
in the appendix \cite{HLi}.
\end{remark}
\begin{remark}
\label{rem:enofendo}
Of course, so far we have neglected smooth proper $*$-endomorphisms of
$A_\theta$, which by \cite{Kodaka,Kodaka1} certainly exist at least
for certain quadratic irrational values of $\theta$. We do not know if
one can construct such endomorphisms to be energy-minimizing.
But we can slightly improve the result of \cite{Kodaka1} as follows.
\end{remark}
\begin{theorem}
\label{thm:endos}
Suppose $\theta$ is irrational.  Then there is a {\lp}necessarily
injective{\rp} unital
$*$-endomorphism $\Phi\co A_\theta \to A_\theta$, with image
$B \subsetneq A_\theta$ having
non-trivial relative commutant and with a conditional expectation of
index-finite type from
$A_\theta$ onto $B$, if and only if $\theta$ is a quadratic
irrational number. When this is the case, $\Phi$ can be chosen to be smooth.
\end{theorem}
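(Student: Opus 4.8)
The plan is to treat the two directions separately, and to build the ``if'' direction on Kodaka's construction while filling in the smoothness and the conditional-expectation refinements.

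For the \emph{necessity} direction, suppose $\Phi\co A_\theta\to A_\theta$ is a unital $*$-endomorphism with image $B$, non-trivial relative commutant $B'\cap A_\theta$, and a conditional expectation $E\co A_\theta\to B$ of index-finite type in the sense of Watatani. First I would observe that $B\cong A_\theta$ (since $\Phi$ is an injective unital endomorphism of the simple algebra $A_\theta$), and that the index-finite conditional expectation forces $A_\theta$ to be finitely generated projective as a (say, left) $B$-module, with a ``Pimsner--Popa'' basis. Taking the unique normalized trace $\Tr$ on $A_\theta$ and using that $\Tr\circ\Phi=\Tr$ (uniqueness of the trace), the module $A_\theta$ over $B\cong A_\theta$ has a well-defined trace-dimension $[A_\theta:B]_{\Tr}=1/\Tr(\Phi^{-1}\text{-pulled-back rank})$; concretely, identifying $K_0(B)\hookrightarrow K_0(A_\theta)$ via $\Phi_*$, the class $[1_{A_\theta}]$ lies in $\Phi_*(K_0(A_\theta))$, so there is an integer matrix relation forcing $\theta$ to satisfy a relation of the form $m\theta = a\theta+b$ over $\ZZ$ with $(a\ b)$ not proportional to $(m\ 0)$ — wait, that is automatic. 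The real input must come from the non-trivial relative commutant: if $B'\cap A_\theta\neq\CC$, then $A_\theta$ is \emph{not} a maximal abelian extension-free situation, and one shows (this is essentially Kodaka's argument, or can be extracted from the structure of finite-index subfactors of $A_\theta$ together with the computation of $K$-theory) that the index $[A_\theta:B]$ is the square of an integer $n>1$ and that $\theta$ must be fixed, up to $GL(2,\ZZ)$, by a hyperbolic element of $GL(2,\ZZ)$ acting by linear fractional transformations — equivalently $\theta$ is a quadratic irrational. I would phrase this via Theorem \ref{thm:unitalhomostomatirrat}: after cutting by a projection, $B\cong M_n(A_\beta)$ sitting unitally in $A_\theta$, which by that theorem requires $n\theta\in\ZZ+\beta\bbZ$ with $\beta$ in the $GL(2,\ZZ)$-orbit of $\theta$; combined with the reverse inclusion coming from the relative commutant being itself a copy of $M_n$ (or containing one), one gets $\theta = g\cdot\theta$ for some $g\in GL(2,\ZZ)$ of infinite order, hence $\theta$ quadratic.

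For the \emph{sufficiency} direction, assume $\theta$ is a quadratic irrational. Then there is a hyperbolic $g=\begin{pmatrix}a&b\\c&d\end{pmatrix}\in SL(2,\ZZ)$ (or $GL(2,\ZZ)$) with $\theta$ a fixed point of the linear fractional action; equivalently $c\theta^2+(d-a)\theta-b=0$, so $c\theta + d = a\theta+b\cdot\theta^{-1}\cdots$ — more usefully, $(c\theta+d)\theta = a\theta+b$. Set $t = c\theta+d$; by choosing $g$ appropriately we may arrange $t>1$ (replace $g$ by a power), and then $t$ is a unit in the order $\ZZ[\theta]$ and $t\Theta\in\ZZ+\theta\bbZ$ with $\Theta=\theta$. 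Now invoke the construction inside Theorem \ref{thm:homostoirrat}: choose an integer $n\ge t$, pick a Rieffel projection $p\in M_n(A_\theta)$ with $\Tr(p)=t$, so that $pM_n(A_\theta)p\cong M_k(A_\beta)$ with $k=\gcd(c,d)$ and $\beta$ in the $GL(2,\ZZ)$-orbit of $\theta$ — but because $g$ fixes $\theta$, one can arrange $\beta=\theta$ and (with a judicious choice of $g$, e.g. with $c,d$ coprime) $k=1$, so $pM_n(A_\theta)p\cong A_\theta$. The resulting unital $*$-homomorphism $\Phi_0\co A_\theta\to pM_n(A_\theta)p\cong A_\theta$ is the desired endomorphism $\Phi$; its image $B$ is a proper corner-type subalgebra. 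For the \emph{relative commutant}: $\Phi$ is, up to the isomorphism $pM_n(A_\theta)p\cong A_\theta$, the corner compression, and its relative commutant in $A_\theta$ contains (the image under that isomorphism of) the algebra $(pM_n(A_\theta)p)'\cap M_n(A_\theta)$, which is non-trivial precisely because $n>t\ge 1$ forces $p\ne 1_n$; more carefully, one checks the commutant is $M_{n}(\CC)^{\text{appropriate block}}$-flavored and of dimension $>1$. For the \emph{conditional expectation of index-finite type}: the canonical trace-preserving conditional expectation $E\co M_n(A_\theta)\to pM_n(A_\theta)p$ given by $x\mapsto pxp$ is \emph{not} what we want (wrong source); instead, since $A_\theta\cong pM_n(A_\theta)p$ is Morita-equivalent to $A_\theta$ and both are unital, $A_\theta$ is finitely generated projective over $B=\Phi(A_\theta)$, and the trace on $A_\theta$ together with the inclusion $B\subseteq A_\theta$ yields (by averaging against a Pimsner--Popa basis / by the standard $\Tr$-preserving expectation onto a unital subalgebra with $A_\theta$ fin. gen. proj. over it) a conditional expectation $E\co A_\theta\to B$; index-finiteness is the assertion that $A_\theta$ is fin. gen. proj. over $B$ with a Watatani basis, which holds here because $[K_0(A_\theta):\Phi_*K_0(A_\theta)]$ is finite (index $= t\cdot t' \in\ZZ[\theta]$, a nonzero algebraic integer) — this is exactly where ``$\theta$ quadratic'' is used a second time, via the norm $N(t)=(c\theta+d)(c\theta'+d)\in\ZZ$ being the index.

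For \emph{smoothness}: the projection $p$ can be taken in $M_n(A_\theta^\infty)$ (Rieffel's projections are smooth), the Morita equivalence $pM_n(A_\theta^\infty)p\cong A_\theta^\infty$ holds at the smooth level, and the embedding $A_{c\theta}\hookrightarrow A_\theta$, $U\mapsto u^c$, $V\mapsto v$, is manifestly smooth; composing smooth maps gives a smooth $\Phi$. The main obstacle I expect is the necessity direction — specifically, extracting from ``$B'\cap A_\theta\ne\CC$ plus index-finite $E$'' the conclusion that $\theta$ is a quadratic irrational, rather than merely (as the $K$-theory alone gives) that $\theta$ lies in a $GL(2,\ZZ)$-orbit related to itself under the monoid $M$, which is vacuous. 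The point is that a \emph{non-trivial relative commutant} rules out the ``stretch'' matrices $\mathrm{diag}(r,1)$ — those give $B=A_{r\theta}\subsetneq A_\theta$ with trivial relative commutant (maximal abelian-generator argument as in the proof of the first Theorem of Section \ref{sec:harmonic}) — forcing the operative $M$-element to be genuinely in $GL(2,\ZZ)$ composed on both sides, i.e. $g\cdot\theta=\theta$ for $g\in GL(2,\ZZ)$ nontrivial, which is the definition of $\theta$ quadratic. Making this rigorous in the presence of the matrix-algebra fudge factors of Theorem \ref{thm:unitalhomostomatirrat} is the delicate bookkeeping step, and is where I would lean on Kodaka's original argument \cite{Kodaka1} and on Watatani's index theory to pin down that the finite index equals the norm of a unit in $\ZZ[\theta]$.
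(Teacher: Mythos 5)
Your sufficiency construction has a genuine gap at exactly the point the theorem is about: the non-trivial relative commutant. You build a single unital embedding $\Phi_0\co A_\theta\to pM_n(A_\theta)p\cong A_\theta$ and claim its image $B$ has non-trivial relative commutant because $(pM_n(A_\theta)p)'\cap M_n(A_\theta)$ is non-trivial. But the relative commutant you need is $B'\cap pM_n(A_\theta)p$, computed \emph{inside the corner}, not the commutant of the corner inside the ambient $M_n(A_\theta)$; elements such as $1_n-p$ do not lie in $pM_n(A_\theta)p$ and have no ``image under the isomorphism $pM_n(A_\theta)p\cong A_\theta$.'' In fact a single embedding of this kind typically has \emph{trivial} relative commutant --- you yourself observe in your necessity discussion that $A_{r\theta}\hookrightarrow A_\theta$, $U\mapsto u^r$, $V\mapsto v$, has trivial relative commutant by the maximal-abelian argument --- and your composite of Morita equivalences with such an embedding offers no reason to behave differently. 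The paper's (Kodaka's) construction is structurally different, and the difference is essential: for $\theta$ quadratic one chooses a projection $e$ with $\Tr(e)=a\theta+d\in(0,1)$ so that \emph{both} $(a\theta+d)\theta$ and $(1-a\theta-d)\theta$ lie in $\bbZ+\theta\bbZ$, obtains injective $*$-homomorphisms $\varphi_1\co A_\theta\to A_\theta$ with image $eA_\theta e$ and $\varphi_2$ with image $e^\perp A_\theta e^\perp$, and sets $\Phi=\varphi_1+\varphi_2$. Then $e$ manifestly commutes with the image, which is what puts a non-scalar element in the relative commutant; no single-corner embedding does this.

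The same two-block structure is what makes the conditional expectation concrete: the paper takes $\Psi(x)=\tfrac12\bigl(exe+e^\perp xe^\perp+\varphi_2(\varphi_1^{-1}(exe))+\varphi_1(\varphi_2^{-1}(e^\perp xe^\perp))\bigr)$ and quotes Kodaka for the index-finite type property, whereas your appeal to a Pimsner--Popa basis presupposes that $A_\theta$ is finitely generated projective over your $B$, which you have not established. Your necessity sketch is also not a proof (the $K$-theoretic relation you extract is, as you admit, vacuous, and the passage from ``non-trivial relative commutant plus finite index'' to ``$\theta$ quadratic'' is only gestured at); but since the paper itself simply cites Kodaka for that direction, the substantive defect in your write-up is the sufficiency construction.
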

\begin{proof}
The ``only if'' direction and the idea behind the ``if'' direction are
both in \cite{Kodaka1}. We just need to modify his construction as
follows.  Suppose $\theta$ is a quadratic irrational. Thus there exist
$a,\,b,\,c\in\bZ$ with $a\theta^2+b\theta+c=0$, $a\ne0$. Choose $d\in
\bZ$ with $0 < a\theta + d < 1$, and let $e$ be an orthogonal
projection in $A_\theta$ with trace $a\theta + d$. Since
\[
(a\theta + d)\theta = a\theta^2 + d\theta = (d-b)\theta - c \in
\bZ+\theta\bZ, 
\]
by Theorem \ref{thm:homostoirrat}, there is an injective $*$-homomorphism
$\varphi_1\co A_\theta \to A_\theta$ with image $eA_\theta e$.
Let $\ep = 1-e$.  Since $\Tr(1-e) = -a\theta + 1 - d$ and
\[
(-a\theta + 1 - d)\theta = -a\theta^2 + (1-d)\theta = (1+b-d)\theta + c \in
\bZ+\theta\bZ, 
\]
there is also an injective $*$-homomorphism
$\varphi_2\co A_\theta \to A_\theta$ with image $\ep A_\theta
\ep$. Since $eA_\theta e$ and $\ep A_\theta\ep$ are orthogonal, $\Phi
= \varphi_1+ \varphi_2$ is a unital 
$*$-endomorphism of $A_\theta$ whose image has $e$ in its relative
commutant. It is clear (since $e$ can be chosen smooth) that $\Phi$
can be chosen to be smooth.
The last part of the argument can be taken more-or-less
verbatim from \cite{Kodaka}. Let
\[
\Psi(x)= \frac12\Bigl( exe + \ep x\ep + \varphi_2(\varphi_1^{-1}(exe))
+ \varphi_1(\varphi_2^{-1}(\ep x\ep)) \Bigr).
\]
Then $\Psi$ is a faithful conditional expectation onto the image of
$\Phi$, and it has index-finite type as shown in \cite[\S2]{Kodaka}.
\end{proof}
\begin{remark}
\label{rem:end0}
As pointed out earlier by Kodaka, the endomorphisms constructed in
Theorem \ref{thm:endos} can be constructed to implement a wide
variety of maps on $K_1$. In fact, one can even choose $\Phi$ so that
$\Phi_*=0$ on $K_1$, with $\Phi$ taking both $u$ and $v$ to the
connected component of the identity in the unitary group. One can see
this as follows. The map $\Phi$ constructed in
Theorem \ref{thm:endos} can be written as $\iota\circ \Delta$, where
$\Delta\co A_\theta \to A_\theta \times A_\theta$ is the diagonal map
and $\iota$ is an inclusion of $A_\theta \times A_\theta$ into
$A_\theta $ (which exists for $\theta$ a quadratic irrational). Since
``block direct sum'' agrees with the addition in $K_1$, it follows
that (in the notation of the proof above) $\Phi_* 
= (\varphi_1)_*+(\varphi_2)_*$ on $K_1$. One can easily arrange to have $
(\varphi_1)_* = (\varphi_2)_* = \Id$, which would make $\Phi_* =
\text{multiplication by }2$. But if $\varphi_3$ is
the automorphism of $A_\theta$ with
$u\mapsto u^{-1}$, $v\mapsto v^{-1}$ and we replace $\Phi = \iota\circ
\Delta$ by $\Phi'=\iota\circ (\Id\times \varphi_3)\circ \Delta$, then since
$(\varphi_3)_*=-1$ on $K_1$, we get an
endomorphism $\Phi'$ inducing the $0$-map on $K_1$.

In fact, one can modify the construction so that $\Phi_*$ is any
desired endomorphism of $K_1$. So far  we have seen how to get
$\Phi_*=2$ or $\Phi_*=0$. To get $\Phi_*=1$, use a construction with
three blocks. In other words, choose mutually orthogonal projections
$e$ and $f$ in $A_\theta$ so that there exist $*$-isomorphisms 
$\varphi_1$, $\varphi_2$, and $\varphi_3$ from
$A_\theta$ onto each of $eA_\theta e$, $fA_\theta f$, and
$(1-e-f)A_\theta (1-e-f)$, respectively. (With $a$, $b$, $c$, $d$ as
above, this can be done by choosing $\Tr e = (a\theta+d)^2$
and $\Tr f = (a\theta+d)(1-d-a\theta)$.) As above, one can arrange to have
$(\varphi_1)_*=(\varphi_2)_*=1$ on $K_1$ and $(\varphi_3)_*=-1$. So if
$\Phi = \varphi_1 +\varphi_2 + \varphi_3$, $\Phi$ is a unital
$*$-endomorphism inducing multiplication by $1+1-1=1$ on $K_1$.
Other cases can be done similarly.
\end{remark}

\subsection{Euler-Lagrange equations}
\label{sec:ELeq}
In Proposition \ref{prop:ELeq} below, we determine the Euler-Lagrange
equations for the  
energy functional, $\cL(\varphi)$ in \eqref{eq:qlag}. One striking
difference with the classical  commutative case,
is that one cannot get rid of the ``integral'' $\Tr$ in the
Euler-Lagrange equations whenever $\theta$ is irrational. In Corollary
\ref{cor:satisfiesEL}, we construct  explicit harmonic maps with
respect to $\cL$. 

\begin{proposition}\label{prop:ELeq}
Let $\cL(\varphi)$ denote the energy functional for a unital
$*$-endo\-mor\-phism $\varphi$
of $A_\theta$. Then the Euler-Lagrange equations for $\varphi$ to be a
{\bfseries harmonic map}, that is, a critical point of $\cL$, are:
\begin{equation*}
0  = \sum_{j=1}^2\Big\{\Tr\left(A\,\delta_j\left[\varphi(u)^*
\delta_j(\varphi(u))\right]\right) 
+ \Tr\left(B\,\delta_j\left[\varphi(v)^*\delta_j
(\varphi(v)) \right]\right) 
\Big\}
\end{equation*}
where $A, B$ are self-adjoint elements in $A_\theta$,
constrained to satisfy the equation,
$$A-\varphi(v)^*A\varphi(v) = B-\varphi(u)^*B\varphi(u).$$
\end{proposition}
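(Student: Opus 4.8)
The plan is to read the harmonic-map condition for $\varphi$ as the vanishing of the first variation of $\cL$ along every admissible one-parameter deformation of $\varphi$ through unital $*$-endomorphisms of $A_\theta^\infty$, and to recognize the constraint characterizing those deformations as exactly the linear relation imposed on $A$ and $B$ in the statement.

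First I would describe the admissible variations. A unital $*$-endomorphism $\varphi$ is the same datum as a pair $(U,V)=(\varphi(u),\varphi(v))$ of unitaries in $A_\theta^\infty$ with $UV=e^{2\pi i\theta}VU$, so a smooth deformation $\varphi_s$ with $\varphi_0=\varphi$ is a smooth path $(U_s,V_s)$ of such pairs. Differentiating $U_s^*U_s=1$ at $s=0$ forces $\dot U:=\tfrac{d}{ds}\big|_{s=0}U_s$ to be of the form $\dot U=U\,(iA)$ with $A=A^*\in A_\theta^\infty$, and likewise $\dot V=V\,(iB)$ with $B=B^*$. Differentiating the relation $U_sV_s=e^{2\pi i\theta}V_sU_s$ and simplifying with $e^{2\pi i\theta}VU=UV$ yields, after a short computation, precisely
\[
A-V^*AV=B-U^*BU ,
\]
which is the constraint in the Proposition. (Conversely one should check that every such pair $(A,B)$ is tangent to an honest path of pairs; this is the only genuinely delicate point, to be dispatched either by a perturbation argument inside $A_\theta^\infty$ or simply handled at the formal level, as is standard in deriving Euler--Lagrange equations.)

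Next I would compute the first variation. Since $\Tr$ is a faithful positive trace, $\Tr(x^*)=\overline{\Tr(x)}$, so each summand $\Tr(\delta_j(U_s)^*\delta_j(U_s))$ of $\cL(\varphi_s)$ differentiates at $s=0$ to $2\,\Re\,\Tr(\delta_j(U)^*\delta_j(\dot U))$, and similarly for $V$. Substituting $\dot U=U(iA)$, expanding with the Leibniz rule, integrating by parts via $\Tr(\delta_j(x)y)=-\Tr(x\delta_j(y))$ (\cite[Lemma~2.1]{RosL}), and using $\delta_j(x^*)=\delta_j(x)^*$ together with cyclicity, the terms telescope and the $U$-contribution collapses to $-2\,\Re\sum_j\Tr\big((\delta_j^2 U)^*U\,(iA)\big)$. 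To put this in the stated form I would invoke the identity
\[
U^*\delta_j^2(U)-(\delta_j^2 U)^*U \;=\; 2\,\delta_j\!\big[U^*\delta_j(U)\big],
\]
which follows from $\delta_j[U^*\delta_j U]=\delta_j(U)^*\delta_j(U)+U^*\delta_j^2(U)$ once one observes that $\delta_j(U)^*\delta_j(U)$ is self-adjoint while $U^*\delta_j(U)$, hence $\delta_j[U^*\delta_j U]$, is skew-adjoint, so that $\delta_j[U^*\delta_j U]$ is the skew-adjoint part of $U^*\delta_j^2(U)$. Taking imaginary parts (using $-2\Re(iw)=2\,\Im w$, $\Im w=(w-\bar w)/2i$, and cyclicity to rewrite $\bar w$) turns the $U$-contribution into $2i\sum_j\Tr\big(A\,\delta_j[U^*\delta_j(U)]\big)$, and the identical manipulation on the $V$-terms gives $2i\sum_j\Tr\big(B\,\delta_j[V^*\delta_j(V)]\big)$. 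Hence
\[
\tfrac{d}{ds}\big|_{s=0}\cL(\varphi_s)=2i\sum_{j=1}^2\Big\{\Tr\big(A\,\delta_j[\varphi(u)^*\delta_j(\varphi(u))]\big)+\Tr\big(B\,\delta_j[\varphi(v)^*\delta_j(\varphi(v))]\big)\Big\},
\]
and $\varphi$ is a critical point of $\cL$ exactly when this vanishes for every admissible deformation, i.e.\ for every self-adjoint pair $(A,B)$ satisfying $A-\varphi(v)^*A\varphi(v)=B-\varphi(u)^*B\varphi(u)$; that is the assertion of the Proposition.

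The main obstacle is twofold. On the technical side it is the bookkeeping in the variational computation: keeping track of the integration-by-parts terms, the $*$-derivation signs, and the real/imaginary-part juggling so that everything reassembles into the compact form built from $\delta_j[\varphi(u)^*\delta_j(\varphi(u))]$. On the conceptual side it is recognizing that it is precisely the coupling $A-V^*AV=B-U^*BU$ between the two variation directions that blocks an application of faithfulness of $\Tr$ to ``de-trace'' the equation into an honest operator identity --- this is the phenomenon, genuine for irrational $\theta$, that the authors highlight.
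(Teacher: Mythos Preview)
Your proposal is correct and follows essentially the same approach as the paper: parametrize infinitesimal variations by $\dot U=U(iA)$, $\dot V=V(iB)$, differentiate the commutation relation to obtain the constraint $A-V^*AV=B-U^*BU$, and compute the first variation of $\cL$ via the Leibniz rule and one integration by parts. The only difference is organizational: the paper reaches $2i\,\Tr\bigl(A\,\delta_j[U^*\delta_j U]\bigr)$ in a single step (the cross terms combine directly using $\delta_j(U)^*U=-U^*\delta_j(U)$, then one integration by parts), whereas you detour through $(\delta_j^2 U)^*U$ and an auxiliary skew-adjointness identity before arriving at the same expression.
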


\begin{proof}
Consider the 1-parameter family of $*$-endomorphisms of $A_\theta$
defined by
\begin{align*}
\varphi_t(u) &= \varphi(u) e^{ih_1(t)} \\
& = \varphi(u)[ 1 + ith_1'(0) + O(t^2)],\\
\varphi_t(v) &= \varphi(v) e^{ih_2(t)} \\
& = \varphi(u)[ 1 + ith_2'(0) + O(t^2)],
\end{align*}
where $h_j(t), \, j=1, 2$ are 1-parameter families of self-adjoint
operators with
$h_1(0)=0=h_2(0)$.
Therefore
$$
\delta_j(\varphi_t(u)) = \delta_j(\varphi(u))  +  it \delta_j(\varphi 
(u))
h_1'(0) +  it \varphi(u) \delta_j( h_1'(0)) + O(t^2),
$$
and taking adjoints,
\[
\delta_j(\varphi_t(u))^* = \delta_j(\varphi(u))^*  - it h_1'(0) \delta_j
(\varphi(u))^* - it  \delta_j( h_1'(0))\varphi(u)^* + O(t^2),
\]
and similarly with $v$ in place of $u$, $h_2$ in place of $h_1$.
Using this, the term of order $t$ in $\Tr\left(\delta_j(\varphi_t(u))^*
\delta_j(\varphi_t(u)) \right)$
equals
\begin{equation}\label{eqn:der1}
\begin{aligned}
& i\Tr\left(\delta_j( h_1'(0))\left( \delta_j(\varphi(u))^*\varphi(u) -  
\varphi
(u)^* \delta_j(\varphi(u))\right)\right)\\
&= -2i \Tr\left(\delta_j( h_1'(0))\varphi(u)^* 
\delta_j(\varphi(u))\right).
\end{aligned}
\end{equation}
(Here we used the fact that since $\varphi(u)$ is unitary,
$\delta_j(\varphi(u))^*\varphi(u) + \varphi(u)^*\delta_j(\varphi(u))=0$.)
Because of ``integration by parts'' \cite[Lemma 2.1]{RosL},
equation \eqref{eqn:der1} equals
\begin{equation}\label{eqn:der2}
2i\Tr\left(h_1'(0)\,
\delta_j\left[\varphi(u)^*\delta_j(\varphi(u))\right]\right).
\end{equation}
Similarly, we calculate the term of order $t$ in $\Tr\left(\delta_j
(\varphi_t(v))^* \delta_j(\varphi_t(v)) \right)$
to be
\begin{equation}\label{eqn:der3}
2i\Tr\left(h_2'(0)\,
\delta_j\left[\varphi(v)^*\delta_j(\varphi(v))\right]\right).
\end{equation}
Setting $A=h_1'(0), \, B=h_2'(0)$, we deduce that the Euler-Lagrange
equations for $\cL$, defined by
$0=\frac{d}{dt}\cL(\varphi_t)\Big|_{t=0}$, are given as in the  
Proposition.

We next differentiate the constraint equations,
\begin{align*}
0&= \frac{d}{dt}\left(\varphi_t(u)\varphi_t(v) - e^{2\pi i\theta} 
\varphi_t(v)
\varphi_t(u)\right)\Big|_{t=0}   \\
& = \varphi(u)h_1'(0)\varphi(v) +  \varphi(u)\varphi(v) h_2'(0) - e^{2 
\pi i\theta}
\left[\varphi(v)h_2'(0)\varphi(u)
+ \varphi(v)\varphi(u)h_1'(0)\right].
\end{align*}

Using the fact that $\varphi$ is a unital $*$-endomorphism of $A_\theta$,
that is, $\varphi$ satisfies
$$
\varphi(u)\varphi(v) = e^{2\pi i\theta}\varphi(v)\varphi(u)
$$
we easily see that the constraint equations of the Proposition are
also valid.
\end{proof}
The following is not especially interesting since it is already
implied by the stronger result in \cite{HLi}, but it illustrates
how one might check this condition in some cases.
\begin{corollary}
\label{cor:satisfiesEL}
If $\varphi_A$  is the $*$-automorphism
of $A_\theta^\infty$ defined by
$\varphi_A(u) = u^p v^q$ and
$\varphi_A(v) = u^r v^s$, with $A=\begin{pmatrix}p&q\\r&s\end{pmatrix} 
\in  SL(2,\bbZ)$, then $\varphi_A$ is a critical point of $\cL(\varphi)$.
\end{corollary}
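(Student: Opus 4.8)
The plan is to verify directly that $\varphi_A$ satisfies the Euler--Lagrange equations of Proposition~\ref{prop:ELeq}. The key observation is that for this particular endomorphism the four expressions that sit inside the derivations $\delta_j$ in those equations are scalars, so the equations hold for a trivial reason and one never has to inspect the constraint relating $A$ and $B$.

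First I would compute $\delta_j(\varphi_A(u))$ and $\delta_j(\varphi_A(v))$ using only that each $\delta_j$ is a derivation together with the normalization \eqref{eq:ders}. Because the monomials are ``eigenvectors'' of the $\delta_j$, this gives $\delta_1(u^pv^q)=2\pi ip\,u^pv^q$, $\delta_2(u^pv^q)=2\pi iq\,u^pv^q$, $\delta_1(u^rv^s)=2\pi ir\,u^rv^s$, and $\delta_2(u^rv^s)=2\pi is\,u^rv^s$ (these hold for all integer exponents, of either sign). Since $\varphi_A(u)=u^pv^q$ and $\varphi_A(v)=u^rv^s$ are unitary, multiplying by the adjoint collapses each expression to a scalar:
\[
\varphi_A(u)^*\delta_1(\varphi_A(u))=2\pi ip,\qquad \varphi_A(u)^*\delta_2(\varphi_A(u))=2\pi iq,
\]
\[
\varphi_A(v)^*\delta_1(\varphi_A(v))=2\pi ir,\qquad \varphi_A(v)^*\delta_2(\varphi_A(v))=2\pi is,
\]
each a scalar multiple of $1_{A_\theta}$.

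Finally I would use that any derivation annihilates the identity (from $\delta_j(1)=\delta_j(1\cdot 1)=2\delta_j(1)$), hence annihilates every scalar; therefore $\delta_j\bigl[\varphi_A(u)^*\delta_j(\varphi_A(u))\bigr]=0$ and $\delta_j\bigl[\varphi_A(v)^*\delta_j(\varphi_A(v))\bigr]=0$ for $j=1,2$. Plugging this into the Euler--Lagrange equations of Proposition~\ref{prop:ELeq} makes the right-hand side equal $\sum_{j=1}^2\bigl\{\Tr(A\cdot 0)+\Tr(B\cdot 0)\bigr\}=0$ for every admissible self-adjoint pair $A,B$, so $\varphi_A$ is a critical point of $\cL$. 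There is essentially no obstacle; the only point worth stating carefully is that the $\delta_j$ kill the identity, which is what upgrades the eigenvector property of the monomials $\varphi_A(u),\varphi_A(v)$ to $\delta_j$-closedness of the relevant expressions. As a cross-check one can also argue variationally, bypassing Proposition~\ref{prop:ELeq}: in the notation of its proof the order-$t$ term of $\cL(\varphi_t)$ is $2i\sum_j\Tr\bigl(h_1'(0)\,\delta_j[\varphi_A(u)^*\delta_j(\varphi_A(u))]\bigr)+2i\sum_j\Tr\bigl(h_2'(0)\,\delta_j[\varphi_A(v)^*\delta_j(\varphi_A(v))]\bigr)$, which vanishes for exactly the same reason.
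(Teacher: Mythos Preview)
Your proof is correct and follows essentially the same approach as the paper: compute that $\varphi_A(u)^*\delta_j(\varphi_A(u))$ and $\varphi_A(v)^*\delta_j(\varphi_A(v))$ are scalars, observe that derivations annihilate scalars, and conclude via the Euler--Lagrange equations of Proposition~\ref{prop:ELeq}. Your added remarks (that the constraint on the pair $A,B$ never needs to be invoked, and the direct variational cross-check) are correct and make the argument slightly more self-contained than the paper's version.
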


\begin{proof}
We compute:
\begin{align*}
\delta_1(\varphi_A(u)) & = 2\pi i p \varphi_A(u), \quad \delta_2 
(\varphi_A(u))  = 2\pi i q\varphi_A(u), \\
\delta_1(\varphi_A(v)) &= 2\pi i r \varphi_A(v), \quad \delta_2 
(\varphi_A(v)) = 2\pi i s \varphi_A(v).
\end{align*}
Therefore
\begin{align*}
\varphi_A(u)^*\delta_1(\varphi_A(u)) & =  2\pi i p, \\
\varphi_A(u)^*\delta_2(\varphi_A(u)) & =  2\pi i q, \\
\varphi_A(v)^*\delta_1(\varphi_A(v)) & =  2\pi i r, \\
\varphi_A(v)^*\delta_2(\varphi_A(v)) & =  2\pi i s, 
\end{align*}
Applying any derivation $\delta_j, \, j=1,2$, to any of the terms  
above gives zero, since they are all constants. Therefore
$\varphi_A$ is a critical point of $\cL$, by the Euler-Lagrange
equations in Proposition \ref{prop:ELeq}. 
\end{proof}
Of course, a major question is to determine how many critical points
there are for $\cL$ \emph{aside from} those of the special form
$\varphi_A$, $A\in SL(2,\bZ)$.

\subsection{Certain maps between rational noncommutative tori}
\label{sec:rational}
In this subsection we investigate certain harmonic maps between
\emph{rational} noncommutative tori. This is an exception to our
general focus on irrational rotation algebras, but it might shed some
light on what seems to be the most difficult case, of 
(possibly nonunital) maps
$\varphi\co A_\Theta \to M_m(A_\theta)$ implementing a Morita equivalence
when 
\[
\Theta = 1/\theta = \begin{pmatrix}0&1\\1&0
\end{pmatrix}\cdot \theta. 
\]
In effect, we consider this same situation,
but in the case where $\Theta = n>1$ is a positive integer, so that
$A_\Theta=C(\bT^2)$, the universal $C^*$-algebra generated by two
commuting unitaries $U$ and $V$. In this case, $A_\theta=A_{1/n}$ is an
algebra of sections of a bundle over $\bT^2$ with fibers
$M_n(\bC)$. This bundle is in fact the endomorphism bundle of a
complex vector bundle $V$ over $\bT^2$, with Chern class $c_1(V)\equiv
1$ (mod $n$). (More generally, $A_{k/n}$ is the algebra of sections of
the endomorphism bundle of a vector bundle of Chern class $\equiv
k$ (mod $n$); one can see this, for instance, from the explicit
description of the algebra in \cite{Brab}.) If $u$ and $v$ are the canonical
unitary generators of 
$A_{1/n}$, then $u^n$ and $v^n$ are both central, and generate the
center of $A_{1/n}$, which is isomorphic to $C(T^2)$, the copy of
$\bT^2$ here being identified with the spectrum of the algebra $A_{1/n}$. Since
the normalized trace on $A_{1/n}$ sends $1$ to $1$, it takes the value
$\frac{1}{n}$ on rank-one projections $e$, which exist in abundance. 
(The fact that there are lots of global rank-one projections is due to
the fact that the Dixmier-Douady invariant of the algebra vanishes.)
A choice of $e$ determines a $*$-isomorphism $\varphi_e$ from 
$A_0=A_n=C(\bT^2)$ to $eA_{1/n}e$, sending $U$ to $eu^n$ and $V$ to
$ev^n$. Let us compute the action functional on $\varphi_e$.
\begin{proposition} 
\label{prop:ratLandi}
With notation as above, i.e., with $e$ a
  self-adjoint projection in $A_{1/n}$ and
\[
\varphi_e\co C(\bT^2)\xrightarrow{\cong} e A_{1/n}e,\qquad
\varphi_e(U)=eu^n,\,\varphi_e(V)=ev^n,
\]
we have
\[
\cL(\varphi_e) = 2 \Tr \left(\delta_1(e)^2 + \delta_2(e)^2 +
4\pi^2n^2\right)\,. 
\]
Thus, up to a renormalization, this is the same as the action
functional on $e$ as defined in \cite{DKL0,DKL}. Thus $\varphi_e$ is
harmonic exactly when $e$ is harmonic.
\end{proposition}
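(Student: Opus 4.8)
The plan is a direct computation with the Leibniz rule and the trace, exploiting that $u^n$ and $v^n$ are \emph{central} unitaries in $A_{1/n}$. First I would record the elementary identities: from $\delta_1(u)=2\pi i u$, $\delta_2(u)=0$ one gets $\delta_1(u^n)=2\pi i n\,u^n$, $\delta_2(u^n)=0$, and symmetrically $\delta_1(v^n)=0$, $\delta_2(v^n)=2\pi i n\,v^n$. Since $\delta_j$ is a $*$-derivation and $e^*=e$, the element $\delta_j(e)$ is self-adjoint. Applying the Leibniz rule to $\varphi_e(U)=eu^n$ and $\varphi_e(V)=ev^n$ gives
\[
\delta_1(\varphi_e(U))=(\delta_1(e)+2\pi i n\,e)u^n,\qquad \delta_2(\varphi_e(U))=\delta_2(e)u^n,
\]
\[
\delta_1(\varphi_e(V))=\delta_1(e)v^n,\qquad \delta_2(\varphi_e(V))=(\delta_2(e)+2\pi i n\,e)v^n.
\]

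Next I would form the four quantities $\delta_j(\varphi_e(\,\cdot\,))^*\delta_j(\varphi_e(\,\cdot\,))$ entering \eqref{eq:qlag}. Because $u^n$ (resp.\ $v^n$) is a central unitary, every expression of the form $(u^n)^*X u^n$ collapses to $X$; hence the two ``trivial-direction'' products are simply $\delta_2(e)^2$ and $\delta_1(e)^2$, while the two ``nontrivial-direction'' products equal
\[
(\delta_j(e)-2\pi i n\,e)(\delta_j(e)+2\pi i n\,e)=\delta_j(e)^2+2\pi i n\,[\delta_j(e),e]+4\pi^2 n^2 e
\]
(using $e^2=e$), the case $j=1$ coming from $U$ and $j=2$ from $V$. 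Applying $\Tr$ and using traciality, the commutator terms $\Tr\,[\delta_j(e),e]$ vanish; summing the four contributions gives
\[
\cL(\varphi_e)=2\Tr(\delta_1(e)^2)+2\Tr(\delta_2(e)^2)+2\Tr(4\pi^2 n^2 e)=2\Tr\bigl(\delta_1(e)^2+\delta_2(e)^2+4\pi^2 n^2 e\bigr),
\]
which is the asserted identity (recalling $\Tr(e)=\tfrac1n$).

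For the final sentence: the energy $e\mapsto\Tr(\delta_1(e)^2+\delta_2(e)^2)$ of the projection $e$ is, up to an overall positive factor, precisely the action of \cite{DKL0,DKL}, and $e\mapsto\cL(\varphi_e)$ differs from it only by the factor $2$ and the additive constant $2\Tr(4\pi^2 n^2 e)$. An affine reparametrization of a functional leaves its critical points unchanged, and $e\mapsto\varphi_e$ carries a smooth path of projections $e_t$ to a smooth path of unital $*$-homomorphisms $\varphi_{e_t}\co C(\bT^2)\to e_tA_{1/n}e_t$ (one checks, using centrality of $u^n,v^n$, that $e_tu^n$ and $e_tv^n$ are commuting unitaries of the corner $e_tA_{1/n}e_t$). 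Hence, within this family, $\varphi_e$ is a critical point of $\cL$ exactly when $e$ is a critical point of the Dabrowski--Krajewski--Landi energy, i.e.\ a harmonic projection.

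I do not anticipate a genuine obstacle: this is essentially a bookkeeping computation. The points needing care are the correct use of the centrality of $u^n,v^n$ to eliminate the generator factors before taking the trace (though cyclicity of $\Tr$ afterward works equally well), keeping track of which coordinate direction is the ``trivial'' one for $U$ versus for $V$, and being explicit that the equivalence ``$\varphi_e$ harmonic $\iff e$ harmonic'' is understood within the one-parameter families $\{\varphi_{e'}\}$, so that the matching of variations is the real content and the renormalization observation then applies verbatim.
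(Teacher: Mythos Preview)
Your argument is essentially identical to the paper's: both compute $\delta_j(\varphi_e(U))$ and $\delta_j(\varphi_e(V))$ via the Leibniz rule, use centrality of $u^n,v^n$ to collapse the unitary conjugations, and observe that the commutator cross-terms have vanishing trace. One small remark: your computation correctly produces $4\pi^2 n^2 e$ (since $e^2=e$) rather than the paper's $4\pi^2 n^2$, and your parenthetical ``recalling $\Tr(e)=\tfrac1n$'' does not reconcile the two---$\Tr(4\pi^2 n^2 e)=4\pi^2 n$ while $\Tr(4\pi^2 n^2)=4\pi^2 n^2$---so the stated formula in the proposition appears to have a minor slip; this is irrelevant to the harmonic-map conclusion, which only needs that $\cL(\varphi_e)$ differs from the D\c abrowski--Krajewski--Landi energy by an additive constant.
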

\begin{proof}
We have
\[
\delta_1(eu^n)=\delta_1(e)\,u^n + 2\pi i n\, e\,u^n =
\bigl(\delta_1(e)+2\pi i n\, e\bigr)u^n \text{ and } 
\delta_2(eu^n)=\delta_2(e)\,u^n\,,
\]
and similarly for $ev^n$ (with the roles of $\delta_1$ and $\delta_2$
reversed). Since $u^n$ and $v^n$ are central, they cancel out when we
compute $\bigl(\delta_1(eu^n)\bigr)^*\delta_1(eu^n)$, etc., and we
obtain
\[
\begin{aligned}
\bigl(\delta_1(eu^n)\bigr)^*\delta_1(eu^n) &= \bigl(\delta_1(e)+2\pi i
n\,e\bigr)^* \bigl(\delta_1(e)+2\pi i n\,e\bigr) \\
& = \bigl(\delta_1(e)\bigr)^2 + 2\pi i n
\bigl(\delta_1(e)e-e\delta_1(e)\bigr) +4\pi^2 n^2,\\
\bigl(\delta_2(eu^n)\bigr)^*\delta_2(eu^n) &=
\bigl(\delta_2(e)\bigr)^2,\\
\bigl(\delta_1(ev^n)\bigr)^*\delta_1(ev^n) &=
\bigl(\delta_1(e)\bigr)^2,\\
\bigl(\delta_2(ev^n)\bigr)^*\delta_2(ev^n) &=
\bigl(\delta_2(e)\bigr)^2 + 2\pi i n
\bigl(\delta_2(e)e-e\delta_2(e)\bigr) +4\pi^2 n^2,\\
\end{aligned}
\]
and the result follows since the ``cross-terms'' have vanishing trace.
\end{proof}
While a complete classification seems difficult, we at least have an
existence theorem.
\begin{theorem}
\label{thm:harmonicrat}
There exist harmonic nonunital $*$-isomorphisms $\varphi_e\co C(\bT^2)
\to A_{1/n}$.
\end{theorem}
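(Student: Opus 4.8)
By Proposition~\ref{prop:ratLandi}, a rank-one projection $e\in A_{1/n}$ gives a harmonic $\varphi_e$ precisely when $e$ is a critical point of the energy $E(e)=\Tr\bigl(\delta_1(e)^2+\delta_2(e)^2\bigr)$ of \cite{DKL0,DKL}, so it suffices to exhibit \emph{one} smooth rank-one projection satisfying the Euler--Lagrange equation for $E$. The plan is to produce such a projection by a holomorphic construction, after translating the problem into ordinary differential geometry on the rank-$n$ bundle $V\to\bT^2$ with $A_{1/n}=\Gamma(\bT^2,\End V)$, $c_1(V)\equiv 1\pmod n$.

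First I would identify $\delta_1,\delta_2$ with covariant derivatives. Each $\delta_j$ is a derivation of $\Gamma(\End V)$ that restricts on the center $C^*(u^n,v^n)\cong C(\bT^2)$ to the translation-invariant vector field $n\,\partial_{y_j}$ (because $\delta_1(u^n)=2\pi i n\,u^n$, $\delta_1(v^n)=0$, and symmetrically for $\delta_2$); hence, up to the factor $n$, the pair $(\delta_1,\delta_2)$ is the covariant derivative along $\partial_{y_1},\partial_{y_2}$ for a unique connection $\nabla^0$ on $\End V$, and since $\delta_1,\delta_2$ commute while $[\partial_{y_1},\partial_{y_2}]=0$, the curvature of $\nabla^0$ on $\End V$ vanishes. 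Thus $\nabla^0$ is \emph{projectively flat}: it lifts to a unitary connection on $V$ whose curvature is a scalar $2$-form times the identity (the corresponding flat $\CC\PP^{n-1}$-bundle is $\PP(V)$, built from the projective representation of $\pi_1(\bT^2)=\ZZ^2$ given by $u,v$; the nonzero class $c_1(V)$ is why it cannot be genuinely flat). Identifying the fiberwise Fubini--Study metric with the Hilbert--Schmidt metric on rank-one projections, one then checks that, for $e$ a section of the bundle $\PP(V)\to\bT^2$ of rank-one projections, $E(e)$ equals --- up to a positive multiplicative constant and an additive constant --- the Dirichlet energy of $e$ regarded as a map $\bT^2\to\PP(V)$, where $\PP(V)$ carries the metric assembled from the flat metric on $\bT^2$, the connection $\nabla^0$, and the fiberwise Fubini--Study metric.

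Second, since $\nabla^0$ is a unitary connection whose curvature is automatically of type $(1,1)$ on the elliptic curve $\bT^2$, its $(0,1)$-part defines a holomorphic structure on $V$ with Hermitian metric $h$ and $\nabla^0$ as Chern connection; being moreover projectively flat, $\PP(V)$ with the above metric is a flat $\CC\PP^{n-1}$-bundle over $\bT^2$, hence a Kähler manifold (locally a Kähler product, and the Fubini--Study metric is $\PU(n)$-invariant so it is globally defined). Now pick a holomorphic line subbundle $L\hookrightarrow V$; one exists by the classical fact that every holomorphic vector bundle over a compact Riemann surface contains a holomorphic line subbundle (saturate a nonzero sheaf map $\mathcal O(-N)\to V$ with $N\gg 0$). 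Let $e$ be the $h$-orthogonal projection onto $L$. Then $e$ is a holomorphic section of the holomorphic submersion $\PP(V)\to\bT^2$, so $e\co\bT^2\to\PP(V)$ is a holomorphic map between Kähler manifolds, hence harmonic; a fortiori it is critical for $E$ among sections of $\PP(V)$, i.e.\ among rank-one projections in $A_{1/n}$. Since $e$ is smooth and of rank one, Proposition~\ref{prop:ratLandi} shows that $\varphi_e\co C(\bT^2)\xrightarrow{\ \cong\ }eA_{1/n}e$ is a harmonic nonunital $*$-isomorphism, as desired.

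The step I expect to be the main obstacle is the dictionary of the second paragraph: verifying carefully that the noncommutative energy $\cL(\varphi_e)$ (equivalently $E(e)$) is genuinely the Dirichlet energy of a section of $\PP(V)$ for the metric determined by $\nabla^0$, and recognizing $\nabla^0$ as a projectively flat (Hermitian--Einstein-type) connection so that $\PP(V)$ is Kähler and ``holomorphic $\Rightarrow$ harmonic'' applies. If one prefers to argue entirely within the noncommutative torus, an alternative is purely variational: minimize $E$ over the rank-one projections in $A_{1/n}^\infty$ in a fixed connected component --- the components being indexed by $c_1(L)\in\ZZ$ --- extract a weak $H^1$ limit of a minimizing sequence using weak lower semicontinuity of $E$ and a Sobolev embedding for $A_{1/n}$, pass the projection constraint to the limit, choose the component of least infimal energy to rule out concentration of energy into a fiber $\CC\PP^{n-1}$, and finally invoke interior elliptic regularity for energy minimizers in real dimension two to see that the minimizer is smooth; there the bubbling and regularity analysis would be the difficult part.
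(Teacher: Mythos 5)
Your proposal is correct and follows essentially the same route as the paper: reduce via Proposition~\ref{prop:ratLandi} to finding a harmonic rank-one projection, identify such projections with sections of $\bP(V)$ over the elliptic curve, and invoke ``holomorphic $\Rightarrow$ harmonic.'' The only real difference is how the holomorphic section of $\bP(V)$ is produced --- you take the orthogonal projection onto a holomorphic line subbundle of $V$ (which always exists over a curve), whereas the paper uses a nowhere-vanishing holomorphic section of $V$ obtained from Atiyah's theorem after adjusting $c_1(V)$ mod $n$ to ensure sufficient sections; your variant is marginally more economical, and your extra care in checking that $\bP(V)$ is K\"ahler for the relevant metric fills in a step the paper leaves implicit.
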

\begin{proof}
By Proposition \ref{prop:ratLandi}, it suffices to show that $A_{1/n}$
contains harmonic rank-$1$ projections. In terms of the realization of
$A_{1/n}$ as $\Gamma(T^2, \End(V))$, the sections of the endomorphism
bundle of the complex vector bundle $V$, this is equivalent to showing
that $\bP(V)$, the $\bC\bP^{n-1}$-bundle over $\bT^2$ whose fiber at a
point $x$ is the projective space of $1$-dimensional subspaces of $V_x$,
has harmonic sections for its natural connection.

One way to prove this is by using holomorphic geometry. Realize $\bT^2$
as an elliptic curve $E=\bC/(\bZ+i\bZ)$ and $V$ as a holomorphic
bundle. Then a holomorphic section of $\bP(V)$ is certainly
harmonic. But a holomorphic section of $\bP(V)$ will exist provided $V$ has an
everywhere non-vanishing  holomorphic section $s$, since the line
through $s(z)$ is a point of $\bP(V_z)$ varying holomorphically with
$z$. Since $n = \rank V > \dim E = 1$, this is possible by \cite[Theorem
  2, p.\ 426]{AtVB}, assuming that $V$ has ``sufficient holomorphic
sections,'' i.e., that there is a holomorphic section through any
point in any fiber. The condition of having sufficient sections is
weaker than being ample, which we can arrange by changing $c_1(V)$ to
be sufficiently positive (recall that only $c_1(V)$ mod $n$ is fixed,
so we have this flexibility).
\end{proof}
In preparation for Example \ref{ex:rat2} below, it will be useful to
give a concrete model for the algebra $A_{1/n}$.
\begin{proposition}
\label{prop:A1n}
Let $n>1$, and let $\zeta = e^{2\pi i/n}$. Fix the $n\times n$
matrices
\[
u_0 = \begin{pmatrix} 0 & 1 & 0& \cdots\\ 0 & 0 & 1 &\cdots\\
0 & 0 & 0 &\ddots\\
1& 0 & 0 &\cdots \end{pmatrix},\qquad
v_0 = \begin{pmatrix} 1 & 0 & 0& \cdots\\ 0 & \zeta & 0  &\cdots\\
0 & 0 & \zeta^2 &\cdots\\
0& 0 & 0 &\ddots \end{pmatrix},
\]
or in other words $v_0 =
\text{\textup{diag}}(1,\zeta,\zeta^2,\cdots,\zeta^{n-1})$.
Then $A_{1/n}$ can be identified with the algebra of continuous
functions $f\co \bT^2 \to M_n(\bC)$ satisfying the transformation rules
\[
\begin{cases} 
f(\zeta\lambda,\,\mu) = v_0^{-1} f(\lambda,\,\mu) v_0,\\
f(\lambda,\,\zeta\mu) = u_0 f(\lambda,\,\mu) u_0^{-1}.
\end {cases}
\]
\end{proposition}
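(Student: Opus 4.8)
The plan is to produce the evident $*$-homomorphism $\pi\colon A_{1/n}\to B$, where $B$ denotes the \Ca{} of matrix functions described in the statement, and then to show $\pi$ is an isomorphism. First one checks that $B$ really is a unital \Ca{}: with pointwise operations and the sup-norm from $C(\bT^2,M_n(\bC))$ it is closed under products and adjoints because $u_0,v_0$ are unitary, it is norm-closed, and the constant function $I$ lies in $B$. A small cocycle check — iterating each of the two substitutions $n$ times returns the identity, and the two substitutions commute up to the scalar $\zeta^{-1}$ (this uses $u_0v_0=\zeta v_0u_0$) — shows the transformation rules are consistent, so $B\neq 0$; in fact $B$ is the section algebra of a locally trivial $M_n(\bC)$-bundle over the quotient torus $\bT^2/(\bZ/n)^2$.

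Now set $\mathbf u(\lambda,\mu)=\lambda u_0$ and $\mathbf v(\lambda,\mu)=\mu v_0$. From $u_0v_0=\zeta v_0u_0$ one gets $v_0^{-1}u_0v_0=\zeta u_0$ and $u_0v_0u_0^{-1}=\zeta v_0$, and with these identities it is immediate that $\mathbf u,\mathbf v\in B$, that they are unitary, and that $\mathbf u\mathbf v=\zeta\,\mathbf v\mathbf u$. By the universal property of $A_{1/n}$ this yields a unital $*$-homomorphism $\pi\colon A_{1/n}\to B$ with $\pi(u)=\mathbf u$, $\pi(v)=\mathbf v$. Since $u_0^n=I$ we have $\pi(u^n)(\lambda,\mu)=\lambda^nI$ and $\pi(v^n)(\lambda,\mu)=\mu^nI$, so $\pi$ carries the centre $C^*(u^n,v^n)\cong C(\bT^2)$ of $A_{1/n}$ onto the subalgebra of scalar-valued functions in $B$, which is exactly the centre $Z(B)$, and it does so faithfully because $(\lambda,\mu)\mapsto(\lambda^n,\mu^n)$ is surjective onto $\bT^2$.

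For surjectivity I would expand an arbitrary $f\in B$ in the basis $\{u_0^kv_0^l:0\le k,l\le n-1\}$ of $M_n(\bC)$, writing $f(\lambda,\mu)=\sum_{k,l}c_{kl}(\lambda,\mu)\,u_0^kv_0^l$ with $c_{kl}\in C(\bT^2)$. Using $v_0^{-1}u_0^kv_0=\zeta^ku_0^k$ and $u_0v_0^lu_0^{-1}=\zeta^lv_0^l$, the transformation rules for $f$ become $c_{kl}(\zeta\lambda,\mu)=\zeta^kc_{kl}(\lambda,\mu)$ and $c_{kl}(\lambda,\zeta\mu)=\zeta^lc_{kl}(\lambda,\mu)$; hence $\lambda^{-k}\mu^{-l}c_{kl}(\lambda,\mu)$ is $(\bZ/n)^2$-invariant and so equals $h_{kl}(\lambda^n,\mu^n)$ for a unique $h_{kl}\in C(\bT^2)$. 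Then $f=\sum_{k,l}h_{kl}(\mathbf u^n,\mathbf v^n)\,\mathbf u^k\mathbf v^l=\pi\bigl(\sum_{k,l}h_{kl}(u^n,v^n)\,u^kv^l\bigr)$, interpreting $h_{kl}(u^n,v^n)$ via continuous functional calculus in the commutative \Ca{} $C^*(u^n,v^n)$. Thus $\pi$ is onto.

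Injectivity is the one step requiring a little care, and here I would invoke the structure of $A_{1/n}$ already used in the paper: $A_{1/n}\cong\Gamma(\bT^2,\End V)$ is the section algebra of an $M_n(\bC)$-bundle over its spectrum $\bT^2$, so its closed two-sided ideals correspond to open subsets of $\bT^2$ and each such ideal meets $Z(A_{1/n})\cong C(\bT^2)$ in the corresponding ideal of $C(\bT^2)$; since $\pi$ is faithful on the centre, $\ker\pi$ meets the centre in $0$, forcing $\ker\pi=0$. Equivalently, for each $x\in\bT^2$ the composite of $\pi$ with evaluation $B\to M_n(\bC)$ at a point over $x$ is an irreducible representation of $A_{1/n}$ whose kernel restricts to $\mathfrak m_x$ on the centre, hence is the primitive ideal at $x$; intersecting over all $x$ kills $\ker\pi$. (One could instead observe that $\mathrm{Tr}(u_0^kv_0^l)=0$ unless $n\mid k$ and $n\mid l$, so $f\mapsto\int_{\bT^2}\tfrac1n\mathrm{Tr}\,f$ is a faithful trace on $B$ that pulls back along $\pi$ to the canonical faithful trace of $A_{1/n}$.) With $\pi$ both injective and surjective we are done; the remaining work — the two commutation identities for $u_0,v_0$ and the basis bookkeeping — is routine.
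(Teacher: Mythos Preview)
Your proof is correct, but it proceeds by a different route than the paper's. The paper argues representation-theoretically: it observes that every irreducible representation of $A_{1/n}$ is unitarily equivalent to some $\pi_{\mu,\lambda}\colon u\mapsto\mu u_0,\ v\mapsto\lambda v_0$ with $(\mu,\lambda)\in\bT^2$, then notes that conjugation by $v_0^{-1}$ and by $u_0$ identifies $\pi_{\mu,\lambda}$ with $\pi_{\zeta\mu,\lambda}$ and $\pi_{\mu,\zeta\lambda}$ respectively, so the spectrum is $\bT^2$ modulo the $(\bZ/n)^2$-action; the section-algebra description and the transformation rules then drop out. You instead build the map $\pi\colon A_{1/n}\to B$ explicitly from the universal property, prove surjectivity by the Fourier-type decomposition in the basis $u_0^kv_0^l$, and prove injectivity via the ideal structure over the centre (or the faithful-trace variant). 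Your approach is more self-contained and makes the identification completely explicit; the paper's approach is shorter and more conceptual but leaves the final identification (``the result easily follows'') to the reader. Either way the key input is the same pair of commutation identities $v_0^{-1}u_0v_0=\zeta u_0$ and $u_0v_0u_0^{-1}=\zeta v_0$.
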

\begin{proof}
Observe that $u_0^n=v_0^n=1$ and that $u_0 v_0 = \zeta v_0 u_0$. It is
then easy to see that the most general irreducible representation of
$A_{1/n}$ is equivalent to one
of the form $\pi_{\mu,\lambda}\co u\mapsto \mu u_0,\, v\mapsto \lambda v_0$
for some $(\mu,\,\lambda) \in \bT^2$.  However, we are
``overcounting,'' because it is clear that $v_0^{-1}$ conjugates
$\pi_{\mu,\lambda}$ to $\pi_{\zeta\mu,\lambda}$, and $u_0$ conjugates
$\pi_{\mu,\lambda}$ to $\pi_{\mu,\zeta\lambda}$. The spectrum of the
algebra $A_{1/n}$ can thus be identified with the quotient of $\bT^2$
by the action by multiplication by $n$-th roots of unity in both
coordinates. The result easily follows.
\end{proof}
\begin{example}
\label{ex:rat2}
We now give a specific example of this situation in which one can
write down an explicit harmonic map. We suspect one can do something
similar in general, but to make the calculations easier, we 
restrict to the case $n=2$. Proposition \ref{prop:A1n} describes
$A_{1/2}$ as the algebra of continuous functions $f\co \bT^2\to
M_2(\bC)$ satisfying 
\begin{equation}
\label{eq:fconds}
\begin{aligned}
f(-\lambda,\mu) &= \begin{pmatrix} 1 & 0\\ 0 & -1\end{pmatrix}
  f(\lambda,\mu) \begin{pmatrix} 1 & 0\\ 0 & -1\end{pmatrix},\\
f(\lambda,-\mu) &= \begin{pmatrix} 0 & 1\\ 1 & 0\end{pmatrix}
  f(\lambda,\mu) \begin{pmatrix} 0 & 1\\ 1 & 0\end{pmatrix}.
\end{aligned}
\end{equation}
If we write $\lambda=e^{i\theta_1}$ and $\mu=e^{i\theta_2}$, we
can rewrite \eqref{eq:fconds} by thinking of 
\[
f=\begin{pmatrix}
f_{11}&f_{12}\\f_{21}&f_{22}
\end{pmatrix}
\]
as defined on $[0,\,\pi]\times [0,\,\pi]$, subject
to boundary conditions
\begin{equation}
\label{eq:fboundconds}
\begin{alignedat}{2}
f_{11}(\pi,\theta_2)&=f_{11}(0,\theta_2), &\qquad
f_{22}(\pi,\theta_2)&=f_{22}(0,\theta_2), \\
f_{12}(\pi,\theta_2)&=-f_{12}(0,\theta_2), &\qquad
f_{21}(\pi,\theta_2)&=-f_{21}(0,\theta_2), \\
f_{11}(\theta_1,\pi)&=f_{22}(\theta_1,0), &\qquad
f_{22}(\theta_1,\pi)&=f_{11}(\theta_1,0), \\
f_{12}(\theta_1,\pi)&=f_{21}(\theta_1,0), &\qquad
f_{21}(\theta_1,\pi)&=f_{12}(\theta_1,0).
\end{alignedat}
\end{equation}
To get a nonunital harmonic map inducing an isomorphism from $C(\bT)$
to a nonunital subalgebra of $A_{1/2}$, we need by Proposition
\ref{prop:ratLandi} to choose $f$ satisfying \eqref{eq:fboundconds} so that
for all $\theta_1$ and $\theta_2$, $f(\theta_1,\theta_2)$ is self-adjoint with
trace $1$ and determinant $0$, and so that $f$ is harmonic. The
conditions \eqref{eq:fboundconds} as well as the conditions for $f$ to be a
rank-one projection will be satisfied provided that $f$ is of the form:
\begin{equation}
\label{eq:f}
f\left({\theta_1},\, {\theta_2}\right) 
= \frac{1}{2}\begin{pmatrix} 1 +\cos(g(\theta_1))\,\cos\theta_2
& \sin(g(\theta_1)) - i \cos(g(\theta_1))\,\sin\theta_2\\
\sin(g(\theta_1)) + i \cos(g(\theta_1))\,\sin\theta_2  &
1-\cos(g(\theta_1))\,\cos\theta_2
\end{pmatrix}
\end{equation}
with $g$ real-valued and satisfying the conditions
\begin{equation}
\label{eq:gh}
g(0)=-\frac{\pi}{2},\qquad g(\pi)=\frac{\pi}{2}.
\end{equation}
For $f$ to be harmonic, we need to make sure it satisfies the
Euler-Lagrange equation $f(\Delta f) = (\Delta f)f$, which is derived
in \cite[\S4.1]{DKL}. In the realization of Proposition
\ref{prop:A1n}, the canonical generators of $A_{1/2}$ are given by
\[
u\left(e^{i\theta_1},\, e^{i\theta_2}\right) =  e^{i\theta_1}u_0,\qquad
v\left(e^{i\theta_1},\, e^{i\theta_2}\right)= e^{i\theta_2}v_0, 
\]
so that $\delta_1$ and
$\delta_2$ act by $2\pi\frac{\partial}{\partial \theta_1}$ and 
$2\pi\frac{\partial}{\partial
\theta_2}$, respectively. Thus up to a factor of $4\pi^2$, $\Delta$
can be identified with the usual Laplacian in the variables $\theta_1$
and $\theta_2$. A messy calculation, which we performed with
$\text{\textsl{Mathematica}}^{\text{\textregistered}}$, though one can
check it by 
hand, shows that the commutator of $f$ and $\Delta f$ vanishes exactly
when the function $g$ in \eqref{eq:f} satisfies the nonlinear (pendulum)
differential equation 
\[
2g''(\theta) + \sin(2 g(\theta))=0.
\]
Subject to the boundary conditions \eqref{eq:gh}, this has a unique
solution, which \textsl{Mathematica} plots as in Figure \ref{fig:gtheta}.
\begin{figure}[htb]
\label{fig:gtheta}
\begin{center}
\includegraphics[height=2in]{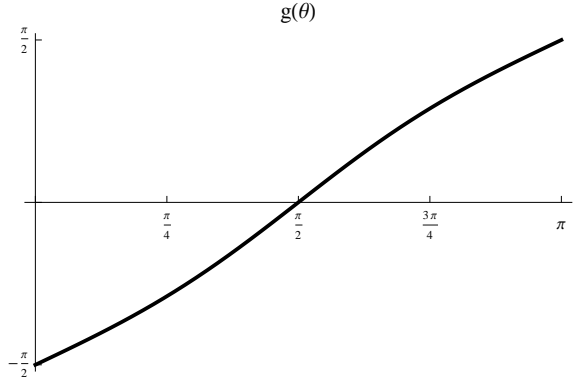}
\caption{Plot of $g(\theta)$ as computed by \textsl{Mathematica}}
\end{center}
\end{figure}

Note incidentally that \textsl{Mathematica} calculations show that
this solution is neither self-dual nor anti-self-dual, in the sense of
\cite{DKL}. In fact, writing out the self-duality and
anti-self-duality equations for a projection of the form \eqref{eq:f}
shows that they reduce to $g'(\theta)=\pm\cos(g(\theta))$, so
the only self-dual or anti-self-dual projections of this
form satisfying the initial condition $g(0)=-\pi/2$ are constant (and
thus don't satisfy the other boundary condition in \eqref{eq:gh}).

It may be of interest to compute the value of $\cL$ for this
example. The normalized trace $\Tr_A$ on $A_{1/2}$ for matrix-valued functions
$f$ satisfying \eqref{eq:fboundconds} is
\[
\Tr_A f = \frac{1}{2\pi^2}\int_0^\pi\int_0^\pi
\Tr f(\theta_1,\theta_2)\,d\theta_1\, d\theta_2,
\]
so by Proposition \ref{prop:ratLandi},
\[
\begin{aligned}
\cL(\varphi_f) & = 2 \Tr_A \left(\delta_1(f)^2 + \delta_2(f)^2 +
4\pi^2n^2\right)\text{ with } n=2\\
&= 2 \Tr_A \left(4\pi^2 \left(\frac{\partial f}{\partial
  \theta_1}\right)^2+ 4\pi^2 \left(\frac{\partial f}{\partial
  \theta_2}\right)^2+ 4\pi^2\cdot 4\right) \\
&=  8\pi^2 \left(4 + \frac{1}{2\pi^2}\int_0^\pi\int_0^\pi \Tr\left(
\left(\frac{\partial f}{\partial   \theta_1}\right)^2
+ \left(\frac{\partial f}{\partial   \theta_2}\right)^2
\right) \,d\theta_1\, d\theta_2\right) \\
&=  8\pi^2 \left(4 + 0.1116\right) \approx 32.89 \pi^2.
\end{aligned}
\]
(The integral was computed numerically with \textsl{Mathematica}.)
\end{example}

\section{Variations and Refinements}
\label{sec:refinements}
One can argue that what we have done up till now was somewhat special,
in that we took a very special form for the metric on the
``worldsheet,'' and ignored the Wess-Zumino term in the action.
In this section, we discuss how to generalize the results given
earlier in the paper. The modifications to the proofs given in the  
earlier sections are routine, and most arguments will not be repeated.

\subsection{Spectral triples and sigma-models}
\label{sec:spectriple}
In this subsection, we write a general sigma-model energy functional for
spectral triples, that specializes to the cases existing in the
literature, including what was discussed earlier in the paper.
It is an explicit variant of the discussion in \cite[\S VI.3]{Co94}
and \cite[\S2]{DKL}. Recall 
that a spectral triple $(\cA, \cH, D)$ is given by an
involutive unital algebra $\cA$ represented as bounded operators
on a Hilbert space $\cH$ and a self-adjoint operator $D$ with compact
resolvent such that the
commutators $[D,a]$ are bounded for all $a\in \cA$.
A spectral triple $(\cA, \cH, D)$ is said to be \emph{even} if the
Hilbert space $\cH$
is endowed with a $\ZZ_2$-grading $\gamma$ which commutes
with all $a\in \cA$ and anti-commutes with $D$.
Suppose in addition that  $(\cA, \cH, D)$ is $(2, \infty)$-summable,
which means (assuming for simplicity that $D$ has
no nullspace) that $\Tr_\omega(a |D|^{-2}) <\infty$, where $\Tr_\omega$ denotes the Dixmier trace.
We recall from VI.3 in \cite{Co94} that
$$
\psi_2(a_0, a_1, a_2) = \Tr((1+\gamma) a_0 [D, a_1][D, a_2])
$$
defines a positive Hochschild $2$-cocycle on $\cA$, where $\gamma =
\left(\begin{array}{cc}1 & 0 \\0 & -1\end{array}\right)$ is the grading
operator on $\cH$, and where $\Tr$ denotes the Dixmier trace composed with
$D^{-2}$.  In this paper,  although we consider the
canonical trace $\Tr$ instead of the above trace, all the properties go
through with either choice.
Using  the Dixmier trace $\Tr_\omega$ composed with $D^{-2}$
  has the advantage of {\em scale
invariance}, i.e., it is invariant under the replacement of $D$ by $ 
\lambda D$
for any nonzero $\lambda \in \bbC$, which becomes relevant when one  
varies the
metric, although for special classes of metrics, the scale invariance  
can be obtained by other means also.
The positivity of $\psi_2$ means that $\langle a_0 \otimes a_1, b_0  
\otimes b_1 \rangle
= \psi_2(b_0^*a_0, a_1, b_1^*)$ defines a positive sesquilinear form on
$\cA \otimes \cA$.

We now give a prescription for energy functionals in the sigma-model
consisting of homomorphisms $\varphi\colon \cB \longrightarrow \cA$,
from a smooth subalgebra of a $C^*$-algebra $\cB$ with target the  
given even $(2, \infty)$-summable spectral triple
$(\cA, \cH, D)$. Observing that $\varphi^*(\psi_2)$ is a 
positive Hochschild $2$-cocycle on $\cB$, we need to choose a
formal ``metric'' on $\cB$, which is a positive element $G\in  
\Omega^2(\cB)$ in the space of universal $2$-forms on $\cB$. Then evaluation
$$
\cL_{G, D}(\varphi) = \varphi^*(\psi_2)(G) \ge 0
$$
defines a general sigma-model action.

Summarizing, the data for a general sigma-model action consists of
\begin{enumerate}
\item A $(2, \infty)$-summable spectral triple
$(\cA, \cH, D)$;
\item A positive element $G\in \Omega^2(\cB)$
in the space of universal $2$-forms on $\cB$, known as a metric on $\cB$.
\end{enumerate}

Consider a unital $C^*$-algebra
generated by the $n$ unitaries $\{U_j: i=1, \ldots n\}$, with
finitely many relations as in \cite{Loring}, and let $\cB$ be a
suitable subalgebra consisting of rapidly vanishing series whose terms
are (noncommutative) monomials in the $U_i$'s. Then a choice of positive
element $G \in \Omega^2(\cB)$ (or metric on $\cB$) is given by
$$
G = \sum_{j,k=1}^n G_{jk}(dU_j)^*dU_k,
$$
where the matrix $(G_{jk})$ is symmetric, real-valued, and positive
definite. Then we compute the energy functional in this case,
$$
\cL_{G,D}(\varphi) = \varphi^*(\psi_2)(G)  =
\sum_{j,k=1}^n G_{jk}\Tr((1+\gamma) [D, \varphi(U_j)^*][D,  
\varphi(U_k)]) \ge 0.
$$

The \emph{Euler-Lagrange equations} for $\varphi$ to be a
critical point of $\cL_D$ can be derived as in Proposition
\ref{prop:ELeq}, but since the equations are long, we omit them.

We next give several examples of this sigma-model energy functional.
In all of these cases, the target algebra $\cA$ will be $A_\theta^ 
\infty$.
The first example is the
D\c abrowski-Krajewski-Landi model \cite{DKL}, consisting of
non-unital $*$-homomorphisms
$\varphi\colon \bbC \longrightarrow A_\theta^\infty$. Note that
$\varphi(1)=e $ is a projection in
the noncommutative torus $A_\theta$, and for any
$(2, \infty)$-summable spectral triple  $(A_\theta^\infty, \cH, D)$
on the noncommutative torus, our sigma-model energy functional is
$$
\cL_D(\varphi) = \Tr\left[(1+\gamma) [D,e][D,e]\right].
$$
Choosing the even spectral triple given by $\cH = L^2(A_\theta)
\otimes \bbC^2$ consisting of
the Hilbert space closure of $A_\theta$ in the canonical scalar
product coming from the trace,
tensored with the $2$-dimensional representation space
of spinors, and $D =\gamma_1 \delta_1 + \gamma_2 \delta_2$, where
$$
\gamma_1 = \left(\begin{array}{cc}0 & 1 \\1 & 0\end{array}\right) ,
\quad
\gamma_2 = \left(\begin{array}{cc}0 & -i  \\ i &
0\end{array}\right)
$$
are the Pauli matrices, we calculate that
$$
\cL_D(\varphi) = \sum_{j=1}^2\Tr\left[(\delta_j e)^2\right],
$$
recovering the action in \cite{DKL} and the Euler-Lagrange equation
$(\Delta e)e = e(\Delta e)$ there.

Next, we consider the model in Rosenberg \cite[\S5]{RosL}, consisting of
unital $*$-homo\-mor\-phisms
$\varphi\colon C(S^1) \longrightarrow A_\theta^\infty$. Let $U$ be the
unitary given by multiplication by
the coordinate function $z$ on $S^1$ (considered as the unit circle
$\bT$ in $ \bbC$). Then  $\varphi(U)$ is a unitary in
the noncommutative torus $A_\theta$, and for any $(2, \infty)$-summable
spectral triple  $(A_
\theta^\infty, \cH, D)$
on the noncommutative torus, our sigma-model energy functional is
$$
\cL_D(\varphi) = \Tr\left[(1+\gamma)[D, \varphi(U)^*][D,  
\varphi(U)]\right].
$$
Choosing the particular spectral triple on the noncommutative torus as
above, we calculate that
   $$
\cL_D(\varphi) = \sum_{j=1}^2\Tr\left[(\delta_j (\varphi(U)))^*
\delta_j (\varphi(U))\right],
$$
recovering the action in \cite{RosL} and the Euler-Lagrange equation
\[
\varphi(U)^*\Delta(\varphi(U)) +
(\delta_1(\varphi(U)))^*\delta_1(\varphi(U)) + 
(\delta_2(\varphi(U)))^*\delta_2(\varphi(U)) = 0
\]
there.

The final example is the one treated in this paper. For any (smooth)
homomorphism $\varphi\colon A_\Theta \longrightarrow A_\theta$ and
any $(2, \infty)$-summable spectral triple $(A_\theta^\infty, \cH, D)$,
and any positive element $G \in \Omega^2(\cA_\Theta)$ (or metric on $ 
\cA_\Theta$) given by
$$
G = \sum_{j,k=1}^2 G_{ij}(dU_j)^*dU_k,
$$
the energy of $\varphi$ is
$$
\cL_{G,D}(\varphi) = \varphi^*(\psi_2)(G)  =
\sum_{j,k=1}^2 G_{jk}\Tr((1+\gamma) [D, \varphi(U_j)^*][D,  
\varphi(U_k)]) \ge 0.
$$
where $U$, $V$ are the canonical generators of $A_\Theta$.

Choosing the particular spectral triple on the noncommutative torus as
above, we
obtain the action  and Euler-Lagrange equation considered in \S3.

One can consider other choices of spectral triples on $A_\theta$
defined as follows. For instance, let
$g = \left(\begin{array}{cc} g_{11} & g_{12} \\g_{21} &
g_{22}\end{array}\right)
\in M_2(\bbR)$
be a symmetric real-valued positive definite matrix.
Then one can consider the $2$-dimensional complexified Clifford algebra,
with self-adjoint generators $\gamma_\mu \in M_2(\bbC)$ and relations
$$
\gamma_\mu \gamma_\nu + \gamma_\nu\gamma_\mu = g^{\mu\nu}, \qquad \mu,
\nu =1,2,
$$
where $(g^{\mu\nu})$ denotes the matrix $g^{-1}$.
Then with $\cH$ as before, define $D = {\displaystyle\sum_{\mu=1}^2} \gamma_\mu
\delta_\mu$. The energy in this more general case is
\begin{equation}
\label{eqn:energy-general}
\cL_{G,D}(\varphi) = \varphi^*(\psi_2)(G)  =
\sum_{j,k=1}^2 \sum_{\mu,\nu=1}^2 
G_{jk} g^{\mu\nu} \Tr(\delta_\mu(\varphi(U_j))^*\delta_ 
\nu(\varphi(U_k)) \ge 0.
\end{equation}
In this case, the trace $\Tr$ is either the Dixmier trace composed with
$D^{-2}$, or the canonical trace on $A_\theta$ multiplied by the  
factor $\sqrt{\det(g)}$, to make the
energy scale invariant. The Euler-Lagrange equations in this case are 
an easy modification of those in Proposition \ref{prop:ELeq}.

\subsection{The Wess-Zumino term} 
\label{sec:WZ}
There is a rather large literature on ``noncommutative Wess-Zumino
theory'' or ``noncommutative WZW theory,'' referred to in
\cite[\S5]{DKL0} and summarized in part in the survey articles
\cite{DN} and \cite{Sz03}. Most of this literature seems to deal with the
Wess-Zumino-Witten model (where spacetime is a compact group) or with
the Moyal product, but we have been unable to find anything that
applies to our situation where both spacetime and the worldsheet are
represented by noncommutative $C^*$-algebras (or dense subalgebras
thereof). For that reason, we will attempt here to reformulate the
theory from scratch.

The classical Wess-Zumino term is associated to a closed $3$-form $H$
with integral periods on $X$ (the spacetime manifold). If $\Sigma^2$
is the boundary of a $3$-manifold $W^3$, and if $\varphi\co\Sigma \to
X$ extends to $\widetilde\varphi\co W\to X$, the Wess-Zumino term is 
\[
\cL_{WZ}(\varphi)=\int_W (\widetilde\varphi)^*(H).
\]
The fact that $H$ has integral periods guarantees that 
$e^{2\pi i \cL_{WZ}(\varphi)}$    is
well-defined, i.e., independent of the choice of $W$ and the extension
$\widetilde\varphi$ of $\varphi$.

To generalize this to the noncommutative world, we need to dualize all
spaces and maps. We replace $X$ by $\cB$ (which in the classical case
would be $C_0(X)$), $\Sigma$ by $\cA$, and $W$ by $\cC$. Since $H$
classically was a cochain on $X$ (for de Rham cohomology), it becomes
an odd \emph{cyclic cycle} on $\cB$. The integral period condition can
be replaced by requiring
\begin{equation}
\label{eq:intper}
\langle H, u\rangle \in \bZ
\end{equation}
for all classes $u\in K^1(\cB)$ (dual $K$-theory, defined via spectral
triples or
some similar theory). The inclusion $\Sigma\hookrightarrow W$ dualizes
to a map $q\co \cC\to \cA$, and we suppose $\varphi\co \cB\to \cA$ has
a factorization
\[
\xymatrix{& \cC \ar[d]^q
\\ \cB \ar[r]^\varphi \ar@{.>}[ru]^{\widetilde\varphi}
& \,\cA.
}
\]
The noncommutative Wess-Zumino term then becomes
\[
\cL_{WZ}(\varphi)=\langle {\widetilde\varphi}_*(H), [\cC] \rangle,
\]
with $[\cC]$ a cyclic cochain corresponding to
integration over $W$. The integral period condition is relevant for
the same reason as in the classical case---if we have another
``boundary'' map $q'\co \cC'\to \cA$ and corresponding
$\widetilde\varphi'\co \cB \to \cC'$, and if
$\cC \oplus_\cA \cC'$ is
``closed,'' so that $[\cC]-[\cC']$ corresponds to a class $u\in
K^1(\cC \oplus_\cA \cC')$, then 
\[
\langle {\widetilde\varphi}_*(H), [\cC]\rangle -\langle
	{\widetilde\varphi'}_*(H), [\cC']\rangle
= \langle H, (\widetilde\varphi\oplus \widetilde\varphi')^*(u)
\rangle \in \bZ,
\]
and thus $e^{2\pi i\cL_{WZ}(\varphi)}$ is the same whether computed via $[\cC]$
or via $[\cC']$.

Now we want to apply this theory when $\cA= A_\theta$ (or a suitable
smooth subalgebra, say $A_\theta^\infty$).
If we realize $A_\theta$ as the crossed product $C^\infty(S^1)\rtimes_
\theta \bbZ$, we can view $A_\theta^\infty$ as the ``boundary''
of $\cC = C^\infty(D^2)\rtimes_\theta \bbZ$, where $D^2$ denotes the unit disk
in $\bbC$. The natural element $[\cC]$ is the trace on $\cC$ coming
from normalized Lebesgue measure on $D^2$.

To summarize, it is possible to  enhance the sigma-model action on a
spacetime algebra $\cB$ with the addition of
a Wess-Zumino term $\cL_{WZ}(\varphi)$, depending on a choice of a
``flux'' $H$.

\subsection{More general spacetimes}
\label{sec:bundles}
In references such as \cite{LNR, MR, MR1}, T-duality considerations
suggested that very often one should consider spacetimes which are not
just noncommutative tori, but ``bundles'' of noncommutative tori over
some base space, such as the {\Ca} of the discrete Heisenberg group,
called the ``rotation algebra'' in \cite{AnP}. A theory of some of
these bundles was developed in \cite{ENOO}.

For present purposes, the following definition will suffice:
\begin{definition}
\label{def:ATheta}
Let $Z$ be a compact space and let $\Theta\co Z\to\bbT$
be a continuous function from $Z$ to the circle group.
We define the \emph{noncommutative torus bundle algebra}
associated to $(Z, \Theta)$ to be the universal {\Ca}
$A=A(Z,\Theta)$ generated over a central copy of $C(Z)$
(continuous functions vanishing on the base space, $Z$) by
two unitaries $u$ and $v$, which can be thought of as continuous
functions from $Z$ to the unitaries on a fixed Hilbert space
$\cH$, satisfying the commutation rule
\begin{equation}
\label{eq:commrule}
u(z)v(z) = \Theta(z)v(z)u(z).
\end{equation}
Note that $A$ is the algebra
$\Gamma(Z,\cE)$ of sections of a continuous field $\cE$ of
rotation algebras, with fiber $A_{\log\Theta(z)/(2\pi i)}$ over
$z\in Z$.
\end{definition}
\begin{examples}
\label{ex:ATheta}
The reader should keep in mind three key examples of Definition
\ref{def:ATheta}. If $Z=\{z\}$ is a point, $A(Z, \Theta)$ is
just the rotation algebra $A_{\log\Theta(z)/(2\pi i)}$. More
generally, if $\Theta$ is a constant function with constant
value $e^{2\pi i \theta}$, then $A(Z, \Theta) = C(Z)\otimes A_\theta$.
And finally, there is a key example with a nontrivial function
$\Theta$, that already came up in \cite{MR} from T-dualization of
$\bbT^3$ (viewed as a principal $\bbT^2$-bundle
over $\bbT$)  with a nontrivial H-flux, namely the group {\Ca}
of the integral Heisenberg group. In this example, $Z=S^1=\bbT$
and $\Theta\co \bbT \to \bbT$ is the identity map. If $w$ is
the canonical unitary generator of $C(Z)$, then in this case the
commutation rule \eqref{eq:commrule} becomes simply $uv=wvu$
(with $w$ central), so as explained in \cite{AnP}, $A$ is the
universal {\Ca} on three unitaries $u$, $v$, $w$, satisfying this
commutation rule.
\end{examples}
\begin{remark}
\label{rem:AThetatoAtheta}
Let $A=A(Z,\Theta)$ be as in Definition \ref{def:ATheta}, and fix
$\theta$ irrational.
Then homomorphisms $A\to A_\theta$, not assumed necessarily to be unital,
can be identified with triples consisting of the following:
\begin{enumerate}
\item a projection $p\in A_\theta$ which represents the
image of $1\in A$,
\item a unital $*$-homomorphism $\rho$ from $C(Z)$ to $pA_\theta 
p$, and
\item a unitary representation of the Heisenberg commutation relations
\eqref{eq:commrule} into the unital {\Ca} $pA_\theta p$, with
the images of $u$ and $v$ commuting with $\rho(C(Z))$. 
\end{enumerate}
\end{remark}
Even in the case discussed above with $A=C^*(u,v,w\mid uv=wvu)$
and in the special case of unital maps, the
classification of maps $\varphi\co A\to A_\theta$ is remarkably intricate.
For example, choose any $n$ mutually orthogonal self-adjoint
projections $p_1, \cdots, p_n$ in $A_\theta$ with $p_1 +\cdots+
p_n=1$. Each $p_j A_\theta p_j$ is Morita equivalent to $A_\theta$,
and is thus isomorphic to a matrix algebra
$M_{n_j}\bigl(A_{\theta_j}\bigr)$, $\theta_j \in GL(2,\bZ)\cdot
\theta$. For each $j$, there is a unital map $\varphi_j\co A\to
M_{n_j}\bigl(A_{\theta_j}\bigr)$ sending the central unitary $w$ to
$e^{2\pi i\theta_j}$. 
Then $\varphi_1\oplus \cdots \oplus \varphi_n$ is a unital
$*$-homomorphism from $A$ to $A_\theta$ sending $w$ to 
$\sum e^{2\pi i\theta_j} p_j$.
Since $n$ can be chosen arbitrarily large, one sees 
that there are quite a lot of inequivalent maps. In this particular
example, $K_1(A)$ is a free abelian group on $3$ generators, 
$u$, $v$, and an additional generator $W\in M_2(A)$
\cite[Proposition 1.4]{AnP} ($w$ does not give
an independent element since it is the commutator of $u$ and $v$).
A notion of ``energy'' for such maps $\varphi$ may be obtained by
summing the energies of the three unitaries $\varphi(u)$,
$\varphi(v)$, and $\varphi(W)$ (for the last of these, one needs to
extend $\varphi$ to matrices over $A$ in the usual way). Estimates for the
energy can again be obtained using the results and methods of \cite{HLi}.

\section{A physical model}
\label{sec:physics}
To write the partition function for the sigma-model studied in this  
paper, recall the expression for the energy from equation 
\eqref{eqn:energy-general},
$$
\cL_{G,D}(\varphi) = \varphi^*(\psi_2)(G)  =
\sqrt{\det(g)} \sum_{j,k=1}^2 \sum_{\mu,\nu=1}^2 
G_{ij} g^{\mu\nu} \Tr(\delta_ 
\mu(\varphi(U_j))^*\delta_\nu(\varphi(U_k)).
$$
It is possible to parametrize the metrics $(g_{\mu\nu})$ by a complex  
parameter $\tau$,
$$
g(\tau)= (g_{\mu\nu}(\tau)) = \left(\begin{array}{cc} 1 & \tau_1 \\  
\tau_1 &
|\tau|^2\end{array}\right)
$$
where $\tau= \tau_1 + i \tau_2 \in \bbC$ is such that $\tau_2>0$. Note  
that $g$ is invertible with inverse given by
$$
g^{-1}(\tau) = (g^{\mu\nu}(\tau)) = \tau_2^{-2} \left(\begin{array} 
{cc} |\tau|^2 & -\tau_1 \\ -\tau_1 &
1 \end{array}\right)
$$
and $\sqrt{\det(g)} = \tau_2$. The ``genus $1$'' partition function is
$$
Z(G,z) = \int_{\tau\in \bbC, \tau_2>0} \frac{d\tau\wedge d\bar\tau} 
{{\tau_2}^2} Z(G, \tau, z)
$$
where
$$
Z(G, \tau, z) = \int \cD[\varphi] e^{-z\cL_{G,\tau}(\varphi)}/ \int  
\cD[\varphi]  .
$$
is the renormalized integral. Here $\cL_{G,\tau} = \cL_{G,D}$, where  
we emphasize the dependence
of the energy on $\tau$. This integral is much too difficult to deal  
with even in the commutative case,
so we oversimplify by considering the semiclassical approximation,  
which is a sum over the
critical points. Even this turns out to be highly nontrivial, and we  
discuss it below. In the special case
when $\Theta=\theta$ and is not a quadratic irrational, then the  
semiclassical approximation
to the partition function above is
$$
Z(G, \tau, z) \approx \sum_{m\in M/\{\pm 1\}} \sum_{A}  e^{-z\cL_{G,\tau} 
(\varphi_A)},
$$
up to a normalizing factor, in the notation as explained later in this  
section. In this approximation,
$$
Z(G,z) \approx \int_{\tau\in \bbC, \tau_2>0}  \frac{d\tau\wedge d\bar\tau} 
{{\tau_2}^2}  \sum_{m\in M/\{\pm 1\}} \sum_{A}  e^{-z\cL_{G,\tau} 
(\varphi_A)}.
$$
We expect $Z(G)$ and $Z(G^{-1})$ to be related as in the classical case  
\cite{Ovrut, Buscher2}, as a manifestation of T-duality.

In the rest of this section we specialize to a (rather oversimplified)
special case based on the results of Section \ref{sec:action-tori}.  As
explained 
before, we basically take our spacetime to be a noncommutative
$2$-torus, and for simplicity, we ignore the integral over $\tau$ (the
parameter for the metric on the worldsheet) and take $\tau=i$.

As pointed out by Schwarz \cite{Sch}, changing a noncommutative
torus to a Morita equivalent noncommutative torus in many cases
amounts to an application of T-duality, and should  not change the
underlying physics.
For that reason, it is perhaps appropriate to stabilize and take
our spacetime to be represented by the algebra $A_\Theta\otimes \cK$
($\cK$ as usual denoting the algebra of compact operators), which encodes
all noncommutative tori Morita equivalent to $A_\Theta$ at once.
(Recall $A_{\Theta'}$ is Morita equivalent to $A_\Theta$ if and only
if they become isomorphic after tensoring with $\cK$, by the 
Brown-Green-Rieffel theorem \cite{BGR}.)

Since this algebra is stable, to obtain maps into the worldsheet
algebras we should take the latter to be stable also, and thus we
consider a sigma-model based on maps $\varphi\co A_\Theta\otimes \cK
\to A_\theta\otimes \cK$, where $\theta$ is allowed to vary
(but $\Theta$ remains fixed). Via the results of Section \ref{sec:morphisms},
such maps exist precisely when there is a morphism of ordered abelian
subgroups of $\bR$, from $\bbZ + \bbZ \Theta$ to $\bbZ + \bbZ \theta$,
or when there exists $c\theta + d\in \bbZ + \bbZ \theta$,
$c\theta + d >0$,  such that
$(c\theta + d)\Theta \in \bbZ + \bbZ \theta$, i.e., when there exists
$m\in M = GL(2,\bbQ)\cap M_2(\bbZ)$ (satisfying the sign condition
$c\theta + d >0$) such that $\Theta = m\cdot \theta$
or $\theta = m^{-1}\cdot \Theta$ for the action of $GL(2,\bbQ)$
on $\bbR$ by linear fractional transformations.

Given that $\Theta = m\cdot \theta$ for some $m\in M$, the matrix
$m$ determines the map $\varphi_*\co 
K_0(A_\Theta\otimes \cK) \to K_0(A_\theta\otimes \cK)$,
which turns out to be multiplication by
\begin{equation}
\label{eq:denom}
\cD\left(m ,\,\theta\right) = \left|c\theta
  + d\right| \text{ if } m=
  \begin{pmatrix}a&b\\c&d\end{pmatrix}.
\end{equation}
(Note the similarity with the factor that appears in the
transformation law for modular forms. Also note that if $\Theta =
m\cdot \theta$, then also $\Theta =
m'\cdot \theta$ for many other matrices $m'$, since one can multiply both
rows by the same positive constant factor.) 
However, $m$ does not determine the map induced by $\varphi$ on $K_1$.
A natural generalization of Conjecture \ref{conj:minenergy}
would suggest that if $\theta=\Theta$ and $m=1$, at least if $\theta$
is not a quadratic irrational (so as to exclude the Kodaka-like maps), 
then the induced map $\varphi_*$ on $K_1$ has a matrix 
\[
A=\begin{pmatrix}p&q\\r&s\end{pmatrix} 
\]
in $SL(2,\bZ)$, and  there should be (up to gauge equivalence)
a unique  energy-mini\-mi\-zing map $\varphi \co A_\Theta\otimes \cK
\to A_\theta\otimes \cK$ with energy 
\[
4\pi^2\,(p^2 + q^2 + r^2 + s^2). 
\]
Note that $p^2 + q^2 + r^2 + s^2$ is the squared Hilbert-Schmidt norm
of $A$ (i.e., the sum of the squares of the entries). We want to
generalize this to the case of other values of $m$. 

Unfortunately, the calculation in Section \ref{sec:rational} suggests
that there may not be a good formula for the energy of a harmonic map
just in terms of the induced maps on $K_0$ and $K_1$. But  a rough
approximation to the partition function might be something like
\begin{equation}
\label{eq:partition}
Z(z) \approx \sum_{m\in M/\{\pm 1\}} \sum_{A} e^{- 4\pi^2 \cD(m, \theta)
\Vert A\Vert_{HS}^2 z}\,.
\end{equation}
The formula $ 4\pi^2 \cD(m, \theta) \Vert A\Vert_{HS}^2$ for the
energy is valid not just for the automorphisms $\varphi_A$ but also
for the map $U\mapsto u^pv^q$, $V\mapsto u^rv^s$ with
\[
A = \begin{pmatrix}p&q\\r&s\end{pmatrix},\quad \det A = n
\]
from $A_{n\theta}$ to $A_\theta$, which one can check to be harmonic,
just as in Corollary \ref{cor:satisfiesEL}. The associated map on $K_0$
corresponds to the matrix
\[
m=\begin{pmatrix}n&0\\0&1\end{pmatrix}
\]
with $\cD(m,\theta)=1$.

\end{document}